\documentclass{article}
\usepackage{fullpage}
\usepackage[utf8]{inputenc}
\usepackage{amsmath, 
amsfonts, 
amsthm}
\usepackage{hyperref}
\usepackage{cleveref}
\usepackage{graphicx}
\usepackage{comment}
\usepackage{thm-restate}
\usepackage{bbm}
\usepackage{tikz}
\usepackage[linewidth=1pt]{mdframed}
\usepackage[most]{tcolorbox}
\definecolor{block-gray}{gray}{0.85}
\newtcolorbox{shadequote}{colback=block-gray,grow to right by=-2mm,grow to left by=-2mm,
boxrule=0pt,boxsep=0pt,breakable}

\usepackage{color}
\usepackage{enumitem}
\usepackage{cite}

\newtheorem{theorem}{Theorem}[section]
\newtheorem{lemma}[theorem]{Lemma}

\newtheorem{definition}{Definition}
\newtheorem{property}{Property}

\newcommand{\nummats}{n}





\newcommand{\recovery}{\textup{\texttt{Recovery}}}
\newcommand{\Recovery}{\textup{\texttt{Recovery}}}

\newmdtheoremenv{theo}{Theorem}
\bibliographystyle{plainurl}

\begin{document}
\begin{titlepage}
\title{Near-Optimal Fault Tolerance
for Efficient Batch Matrix Multiplication
via an Additive Combinatorics Lens}

\author{
Keren Censor-Hillel\\
Technion\\ 
Israel\\
\textit{ckeren@cs.technion.ac.il}\\
	\and
Yuka Machino\\
MIT\\
USA \\ 
\textit{yukam997@mit.edu}\\ 
	\and
Pedro Soto\\
Oxford University \\
UK\\
\textit{Pedro.Soto@maths.ox.ac.uk}\\
}

\date{}

\maketitle

\begin{abstract}
Fault tolerance is a major concern in distributed computational settings. In the classic master-worker setting, a server (the master) needs to perform some heavy computation which it may distribute to $m$ other machines (workers) in order to speed up the time complexity. In this setting, it is crucial that the computation is made robust to failed workers, in order for the master to be able to retrieve the result of the joint computation despite failures. A prime complexity measure is thus the \emph{recovery threshold}, which is the number of workers that the master needs to wait for in order to derive the output. This is the counterpart to the number of failed workers that it can tolerate. 

In this paper, we address the fundamental and well-studied task of matrix multiplication. Specifically, our focus is on when the master needs to multiply a batch of $n$ pairs of matrices. Several coding techniques have been proven successful in reducing the recovery threshold for this task, and one approach that is also very efficient in terms of computation time is called \emph{Rook Codes}. The previously best known recovery threshold for batch matrix multiplication using Rook Codes is $O(n^{\log_2{3}})=O(n^{1.585})$.

Our main contribution is a lower bound proof that says that any Rook Code for batch matrix multiplication must have a recovery threshold that is at least $\omega(n)$. Notably, we employ techniques from Additive Combinatorics in order to prove this, which may be of further interest. Moreover, we show a Rook Code that achieves a recovery threshold of $n^{1+o(1)}$, establishing a near-optimal answer to the fault tolerance of this coding scheme.
\end{abstract}

\thispagestyle{empty}
\end{titlepage}

\newpage
\setcounter{page}{1}

\section{Introduction}
\label{sec:introduction}
The master-worker computing paradigm has been extensively studied due to its ability to process large data-sets whose processing time is too expensive for a single machine. To address this, the main machine (master) may split the computation among $m$ workers. Various tasks have been studied, such as matrix multiplication, gradient descent, tensors/multilinear computation, and more~\cite{LLPPR2018, yma17, dfhjcg2020,YAA2018, DBJMG2018,ylrksa19, jj21, sl20, sfsl22,tldk17, cgw21, jj21a, fc21, sma21, rk20}. 

A common issue in such settings is the need to cope with failures, so that faulty workers do not prohibit the completion of the computational task at hand. Indeed, faults are a major concern in many distributed and parallel settings, and coping with them has been studied for several decades~\cite{AttiyaWBook, LynchBook}.

In the master-worker setting, the main complexity measure that addresses the robustness of an algorithm to faults is its \emph{recovery threshold}, defined as follows.

~\\\textbf{Definition of Recovery Threshold~\cite{yma17}.}
The \emph{recovery threshold} of an algorithm in the master-worker setting is the number of workers the master must wait for in order to be able to compute its output.
~\\

A low recovery threshold means that the algorithm can tolerate a higher number of faults. For example, if the master splits the computation to $m$ disjoint pieces and sends one to each worker then it cannot tolerate even a single failure and needs to wait for all $m$ workers. If it splits the computation to $m/2$ pieces and sends each piece to 2 workers then it only needs to wait for $m-1$ machines, and thus can tolerate one failure. Notice that with 2 failures this approach does not work, as the 2 failed workers could be responsible for the same piece of computation. Indeed, such simple \emph{replication} approaches are inefficient in terms of their robustness.

An efficient way for handling such faults is to use coding techniques. Coding has been an extremely successful technique in various distributed computations (see, e.g., \cite{Censor-HillelCG22, AlonBEGH19, Censor-HillelGH19, HaeuplerWZ20, EfremenkoKS20, AggarwalDHS19}) which mostly address communication noise). In particular, coding schemes have been successfully proposed for the task of \emph{matrix multiplication} in the master-worker setting, which is the focus of our work. Multiplying matrices is a vital computational task, which has been widely studied in distributed settings ~\cite{Drucker+PODC14, Censor-Hillel+DC19, LeGall16, Censor+TCS20, CHDKL20, GuptaHKSS22} and in parallel settings (see, e.g., ~\cite{AzadBBDGSTW16,MilakovicSNBB22,BallardDKS16} and many references therein).

Specifically, we address batch matrix multiplication in the master-worker setting, in which the master has $\nummats$ pairs of matrices, $\{(A_i,B_i)\}_{0\leq i\leq\nummats-1}$, whose $n$ products $\{A_i\cdot B_i\}_{0\leq i\leq\nummats-1}$ it needs to compute. All matrices $A_i$ have the same dimensions, as well as all matrices $B_i$.

To see why coding is useful, consider the above na\"ive approach for coping with faults by replication as applied for batch matrix multiplication. For simplicity, suppose that $m$ is a multiple of $n$, i.e., $m = \lambda n$ for some integer $\lambda$. In the replication approach, the master sends each sub-task $A_i \cdot B_i$ to $\lambda$ workers. It is straightforward to see that the recovery threshold here is $\Recovery_\text{replication} (n) \geq (n-1)\lambda +1 = m-\lambda+1$, as $\lambda$ failures could be of workers that are responsible for the same product, where for any given family $\mathcal{F}$ of algorithms for computing the $n$ products, we denote by $\Recovery_\mathcal{F}(n)$ the best recovery threshold of an algorithm in $\mathcal{F}$. In particular, this means that with simple replication, the recovery threshold is linear in the number $m$ of machines.

However, it is well-known that with coding it is possible to do better. Consider the case $n=2$. 
With replication and $m = 2 \lambda$ workers, we have that $\Recovery_\text{replication} (2) \geq \lambda +1$. In contrast, consider the fault tolerance provided by the following simple coding scheme, in which we define $ \tilde A (x) : = A_0 + A_1 x$ and $ \tilde B (x) : = B_0 + B_1 x$. The master sends to each worker $w$ the distinct values $\tilde A(x_w), \tilde B(x_w)$, and each worker $w$ then returns the product $\tilde A(x_w) \cdot \tilde B(x_w)$ to the master. For any 3 workers $u,v,w$, it holds that
$$
\begin{bmatrix}
    x_u^0 & x_u^1 & x_u^2 \\ 
       x_v^0 & x_v^1 & x_v^2 \\ 
          x_w^0 & x_w^1 & x_w^2 \\ 
\end{bmatrix}
\begin{bmatrix}
    A_0 B_0 \\ 
       A_0 B_1 + A_1 B_0 \\ 
         A_1 B_1  \\ 
\end{bmatrix} = 
\begin{bmatrix}
    \tilde A(x_u) \cdot \tilde B(x_u)\\ 
      \tilde A(x_v) \cdot \tilde B(x_v)\\ 
        \tilde A(x_w) \cdot \tilde B(x_w)  \\ 
\end{bmatrix},
$$
and since $ x_u \neq x_v \neq x_w$ then the first matrix is invertible. Thus, the master can retrieve $A_0\cdot B_0$ and $A_1\cdot B_1$ using the results of any 3 workers, giving that $\Recovery_\text{coded} (2)  \leq 3 $.
In particular, the coding scheme achieves a fault tolerance of $\frac{m-3}{m} =  1 - \frac{3}{m} $, which tends to $100\%$ as $m$ grows. However, the replication scheme can only achieve a fault tolerance of $ \frac{2\lambda - 
 (\lambda + 1)  }{2\lambda }  = \frac{ 
1  }{2} - \frac{1}{m} $, which is bounded by $50\%$ even as $m$ grows.

~\\\textbf{\underline{Rook Codes for batch matrix multiplication.}} We consider the Rook Codes given by the following form, proposed in~\cite{sl20, sfsl22}. Denote by
\begin{equation}\label{eq:code_form} 
\tilde A (x) = \sum_{i \in [\nummats]} A_i x^{p_i}, \ \tilde B (x) = \sum_{j \in [\nummats]} B_j x^{q_j} 
\end{equation}
two matrix polynomials that are generated by the master, for some sequences of non-negative integers $P=(p_0,\dots,p_{n-1})$ and $Q=(q_0,\dots,q_{n-1})$.
We denote:
$$L_{P,Q}:=|P+Q|.$$

For $0\leq w\leq m-1$, the master sends to worker $w$ the coded matrices $\tilde A (x_w)$ and $\tilde B (x_w)$, where $x_w\neq x_{w'}$ for every pair of different workers $w\neq w'$. Worker $w$ then multiplies the two coded matrices and sends their product back to the master. Note that $$\tilde{A}(x)\cdot\tilde{B}(x)=\left(\sum_{i\in[n]}A_{i}x^{p_i}\right)\left(\sum_{j\in[n]}B_{j}x^{q_j}\right)=\sum_{i,j\in[n]}A_{i}B_{j}x^{p_i+q_j}.$$ 

Note that while the degree of the above polynomial may be $\max\{P+Q\}$, the number of its non-zero coefficients is bounded by $L_{P,Q} =|P+Q|$, which may be significantly smaller. Thus, when the master receives back $L_{P,Q}$ point evaluations of $\tilde{A}(x)\cdot\tilde{B}(x)$, it can interpolate the polynomial. To be able to extract the coefficients $A_k\cdot B_k$ for all $k\in [n]$ out of the polynomial, we need $A_k\cdot B_k$ to be the only coefficient of $x^{p_k+q_k}$ in the polynomial. That is, the following property is necessary and sufficient.
\begin{property}[The decodability property]
\label{property:decodability}
For all $i,j,k \in [n]$, $p_k+q_k =p_i+q_j $ if and only if $i=k,j=k$.
\end{property}
Throughout the paper, we refer to codes given by Equation~\ref{eq:code_form} that satisfy Property~\ref{property:decodability} as Rook Codes for batch matrix multiplication in the master-worker setting.

To summarize, given two sets of integers $P,Q$ of size $|P|=|Q|=n$ for which Property~\ref{property:decodability} holds, we have that $\Recovery_{\text{Rook-Codes}}(n) = \min_{P,Q}\{L_{P,Q}\}$ (the minimum is taken here over all $P,Q$ that satisfy Property~\ref{property:decodability}). In order to derive the recovery threshold of Rook Codes for batch matrix multiplication, our goal is then to bound $L_{P,Q}$ for all such $P,Q$, or in other words, to bound $|P+Q|$.

~\\\textbf{\underline{How robust can Rook Codes be?}} A simple Rook Code can be derived from the Polynomial Codes of~\cite{yma17}, which are designed for multiplying a single pair of matrices. This would give 
$\tilde A (x) = \sum_{i \in [\nummats]} A_i x^{i}$ and $\tilde B (x) = \sum_{j \in [\nummats]} B_j x^{nj}$,
i.e., we have $p_i=i$ and $q_j=nj$ for every $i,j\in [n]$.
This would immediately bound the recovery threshold of Rook Codes by 
$\recovery_\text{Rook-Codes}(n) \leq ((n-1)+n(n-1))+1 = n^2$, which is already a significant improvement compared to the recovery threshold of replication since it does not depend on $m$ but rather only on the size of the input.
In~\cite{sfsl22}, an intuition was given for why the task is non-trivial, in the form of a lower bound that says that with $P = (0,1,2,...,n-1)$ (i.e., with $\tilde A (x) = \sum_{i \in [\nummats]} A_i x^{i}$), any choice of $Q$ has $L_{P,Q} \geq n^2/2$, which implies that in order to obtain a recovery threshold that is sub-quadratic in $n$, one cannot use such simple values for $P$.

\sloppy{
But for other values of $P,Q$, Rook Codes can indeed do better, as shown by~\cite{sfsl22}, who provided $P$ and $Q$ with $L_{P,Q} = O(n^{\log_2{3}})$, implying that $\recovery_{\text{Rook-Codes}}(\nummats) = O(n^{\log_2{3}})= O(n^{1.585})$. The parameters chosen for this are $P = Q = \{ v_0+ v_1 \cdot 3 + v_2\cdot 3^2 + ... + v_{\ell-1} \cdot 3^{\ell-1} ~|~ v_i  \in  \{0,1\} \}$, where $\ell=\log_2{n}$, assuming that $\ell$ is an integer (an assumption that can easily be removed by standard padding). One can verify that this choice yields a Rook Code (i.e., that Property~\ref{property:decodability} holds), and that the recovery threshold is $\Recovery_{\text{Rook-Codes}}(n) = O(3^{\ell}) = O(3^{\log_2{n}}) = O(2^{\log_2{3} \cdot \log_2{n}}) = O(n^{\log_2{3}})$.
}

This significant progress still leaves the following question open:

~\\
\noindent \fbox{%
    \parbox{\textwidth}{%
        \textbf{Question:} Are there Rook Codes for batch matrix multiplication in the master-worker setting with a recovery threshold that is linear in $n$?
    }%
}

\subsection{Our Contribution}
In this paper, we show that no Rook Code for batch matrix multiplication in the master-worker setting can get a strictly linear in $n$ recovery threshold, but that we can get very close to such value.
 
~\\\textbf{\underline{Lower Bound.}} Our main contribution is showing a super-linear lower bound on the recovery threshold of Rook Codes for batch matrix multiplication in the master-worker setting.

\begin{restatable}[Lower Bound]{theorem}{ThmLB}
\label{thm:x}
   Every Rook Code for batch matrix multiplication in the master-worker setting has a super-linear in $n$ recovery threshold. That is, 
    $$\Recovery_\text{Rook-Codes}(n) = \omega(n).$$
\end{restatable}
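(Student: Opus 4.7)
The plan is a proof by contradiction, using tools from additive combinatorics. Suppose that for some constant $C > 0$ there are infinitely many $n$ admitting a valid Rook code with $|P + Q| \le C n$; the goal is to derive a contradiction for $n$ sufficiently large.

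I would start by applying the Pl\"unnecke-Ruzsa inequality to $|P + Q| \le C |P|$. This yields $|hP + \ell Q| = O(n)$ for every constant $h, \ell \ge 0$, with implicit constants depending only on $C, h, \ell$. In particular, $|P - P|$ and $|Q - Q|$ are $O(n)$, so both $P$ and $Q$ have bounded doubling constants. By Freiman's theorem, $P$ and $Q$ each lie in a generalized arithmetic progression (GAP) of constant dimension $d = d(C)$ and volume $O(n)$; via a Freiman isomorphism, one may view $P$ and $Q$ as dense subsets of a box $B \subset \mathbb{Z}^d$ of volume $O(n)$. The decodability property transfers along the lift since it is an equality of sums.

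Next I would reformulate Property~\ref{property:decodability} in a form convenient for counting. It is equivalent to the statement that for every $k \in [n]$, $(P - p_k) \cap (q_k - Q) = \{0\}$. A straightforward double count then yields the pointwise bound
$$r_{P-P}(x) + r_{Q-Q}(x) \;\le\; n \qquad \text{for every } x \ne 0,$$
where $r_A(x) := |A \cap (A - x)|$ counts ordered pairs $(a, a') \in A^2$ with $a - a' = x$. This captures decodability as a pointwise constraint on the auto-correlation functions of $P$ and $Q$.

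The final step is to combine this constraint with the identity $E(P,Q) = \sum_x r_{P-P}(x) \cdot r_{Q-Q}(x)$ and the Cauchy-Schwarz lower bound $E(P, Q) \ge n^4 / |P+Q| = \Omega(n^3)$, in order to reach a contradiction. The pointwise bound alone, combined with $\sum_x r_{P-P}(x) = \sum_x r_{Q-Q}(x) = n^2$, only suffices to rule out $|P + Q| < 2n$, since the extremal configuration $r_{P-P} \equiv r_{Q-Q} \equiv n/2$ on a support of size $2n - 2$ saturates it with $\sum_{x \ne 0} r_{P-P} r_{Q-Q} = \Theta(n^3)$. To upgrade the bound to $\omega(n)$, I expect to exploit the GAP structure of $P, Q$ obtained in the first step: inside a bounded-dimensional GAP of linear volume, the difference-representation functions $r_{P-P}$ and $r_{Q-Q}$ are jointly concentrated, and the decodability constraint forbids that concentration from occurring on the same differences. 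Quantifying this via a second-moment or Fourier-analytic argument within the GAP --- thereby obtaining an upper bound on $E(P,Q)$ that is $o(n^3)$ for $n$ large and hence contradicts the Cauchy-Schwarz lower bound --- is the main technical challenge and constitutes the core of the additive-combinatorics contribution of the proof.
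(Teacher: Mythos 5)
Your reformulation of the decodability property is correct: writing $S_P(x)=\{k:p_k-x\in P\}$ and $S_Q(x)=\{k:q_k+x\in Q\}$, Property~\ref{property:decodability} says exactly that $S_P(x)\cap S_Q(x)=\emptyset$ for $x\neq 0$, which gives $r_{P-P}(x)+r_{Q-Q}(x)\le n$. The Pl\"unnecke--Ruzsa step, the energy identity $E(P,Q)=\sum_x r_{P-P}(x)\,r_{Q-Q}(x)$, and the Cauchy--Schwarz bound $E(P,Q)\ge n^4/|P+Q|$ are all sound, and you correctly diagnose that this pointwise constraint by itself cannot push past $|P+Q|\gtrsim 2n$.

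The difficulty is that the step you defer --- ``exploit the GAP structure from Freiman and upgrade via a second-moment or Fourier-analytic argument'' --- is not a technical detail but the whole theorem, and there is good reason to doubt it is a second-moment estimate at all. A Behrend set already lives in a one-dimensional GAP (an interval) of volume $n\,e^{O(\sqrt{\log n})}$; it is extremely dense there, its autocorrelation $r_{P-P}$ is heavily concentrated, and with $P=Q$ it satisfies decodability. So the extremal configurations sit only a sub-polynomial factor away from the contradiction you hope to reach, which a variance or off-the-shelf Fourier bound cannot detect --- you would effectively need Roth's theorem with a density-increment iteration. (There is also a secondary issue you do not address: applying Freiman separately to $P$ and to $Q$ produces two GAPs with no a priori compatibility, so ``view $P$ and $Q$ as dense subsets of a common box'' needs an argument.) The paper avoids all of this by using a different engine: after a Ruzsa-style rectification map $\phi_\lambda$ (a lighter substitute for Freiman's theorem), decodability is recast as the property that a tripartite graph on $O(n)$ vertices with $\Theta(n^2)$ edges has every edge in exactly one triangle, and the triangle removal lemma then forces $n=o(N)$ with $N=|P+Q-P-Q|=O(n)$, a contradiction. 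That removal-lemma step is exactly what is missing from your plan, and it is where the $\omega(n)$ quantifier actually comes from.
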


Our main tool for proving Theorem~\ref{thm:x} is the following theorem in additive combinatorics, whose proof is the key technical ingredient of our contribution (Section~\ref{sec:LB}).

\begin{restatable}{theorem}{ThmLBviaAC}
\label{theorem:lower}
Let $P=\{p_0,p_1,\dots, p_{n-1}\}$ and $Q=\{q_0,q_1,\dots, q_{n-1}\}$ be finite subsets of the integers, such that $0\in P$ and $0\in Q$. Suppose that for all $0\le i,j,k \le n-1$, $p_k+q_k =p_i+q_j $ if and only if $i=k,j=k$. Then $|P+Q|=\omega(n)$.
\end{restatable}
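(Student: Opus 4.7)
The plan is to argue by contradiction: assume some constant $C > 0$ admits infinitely many $n$ for which decodable $(P, Q)$ with $|P| = |Q| = n$ satisfy $|P+Q| \le Cn$. The first step is to translate decodability into a clean representation-function inequality. Rewriting $p_k + q_k = p_i + q_j$ as $p_i - p_k = q_k - q_j$, decodability is equivalent to the per-index disjointness $(P - p_k) \cap (q_k - Q) = \{0\}$. Aggregating this over $k$ and converting to indicator sums yields the bound
$$ r_{P-P}(y) + r_{Q-Q}(y) \;\le\; n \qquad \text{for every } y \ne 0, $$
where $r_{A-A}(y) = |\{(a,b) \in A^2 : a - b = y\}|$. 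I will refer to this as the decodability inequality.

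Second, the hypothesis $|P+Q| \le Cn$ forces strong additive structure on $P$ and $Q$. Pl\"unnecke--Ruzsa gives $|(P \cup Q) + (P \cup Q)| = O_C(n)$, and Freiman's theorem then places $P \cup Q$ inside a proper generalized arithmetic progression $G$ of dimension $d = d(C) = O(1)$ and size $|G| = O_C(n)$. Via the canonical Freiman isomorphism of a proper GAP, I would identify $P, Q$ with subsets of $\mathbb{Z}^d$ of density $\Omega_C(1)$ inside a $d$-dimensional box, preserving all additive equalities relevant to decodability.

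Third, I would derive the contradiction by combining the decodability inequality with the density of $P, Q$ in the Freiman box. Inside a bounded-dimensional box of volume $O(n)$, the constant-density sets $P$ and $Q$ each contain $\Omega_C(n^2)$ three-term arithmetic progressions by a multidimensional Varnavides/Szemer\'edi theorem. An averaging argument on common differences would then produce a nonzero $y^\star$ at which both $r_{P-P}(y^\star)$ and $r_{Q-Q}(y^\star)$ exceed $n/2$, so that their sum exceeds $n$ and contradicts the decodability inequality.

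The hard part will be this last step. Naive energy manipulations---Cauchy--Schwarz on $E(P,Q) = \sum_y r_{P-P}(y) r_{Q-Q}(y) \ge n^4/|P+Q|$ combined with the AM--GM bound $r_{P-P}(y) r_{Q-Q}(y) \le n^2/4$ forced by the decodability inequality---only yield a constant-factor separation such as $|P+Q| \ge \varphi n$ (roughly the golden ratio). Upgrading this to $\omega(n)$ requires leveraging Freiman structural rigidity: once $n$ is sufficiently large, constant-density subsets of a bounded-dimensional GAP cannot simultaneously maintain $r_{P-P}(y) + r_{Q-Q}(y) \le n$ at every nonzero $y$. This structural input is where the additive-combinatorial depth of the argument resides.
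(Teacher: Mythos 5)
Your first two steps are sound and the decodability inequality $r_{P-P}(y)+r_{Q-Q}(y)\le n$ (for $y\neq 0$) is a clean and correct reformulation of Property~\ref{property:decodability}: for fixed $y\neq 0$ the index sets $\{k: p_k+y\in P\}$ and $\{k: q_k-y\in Q\}$ are disjoint by decodability, and these have sizes $r_{P-P}(y)$ and $r_{Q-Q}(y)$. The Pl\"unnecke--Ruzsa/Freiman reduction is also a reasonable structural move. This is a genuinely different frame from the paper, which never invokes Freiman's theorem.

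However, your third step, which you correctly flag as the crux, is not a proof and the specific mechanism you sketch would fail. Two problems: (a) counting $\Omega_C(n^2)$ three-term APs in $P$ does not yield a single nonzero $y^\star$ with $r_{P-P}(y^\star)>n/2$. Summing AP counts over common differences only gives that the \emph{average} of $r_{P-P}$ over its $\Theta_C(n)$-element support is $\Omega_C(n)$, with a constant that depends badly on $C$ (and on the Freiman dimension and doubling) and has no reason to exceed $1/2$; for a typical density-$\delta$ subset of a box the maximum of $r_{P-P}$ is about $\delta n$, not above $n/2$. (b) Even if each of $P$ and $Q$ separately had a popular difference above $n/2$, those differences need not coincide, and the decodability inequality only constrains the two representation functions at the \emph{same} $y$. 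Nothing in a generic averaging argument forces the popular differences of $P$ and $Q$ to align. So the "structural rigidity" you appeal to is precisely what still needs to be proved, and your outline does not supply it.

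There is also a conceptual point worth noting: the $\omega(n)$ gain over the easy energy bound is exactly the content of a Roth/Behrend-type theorem, and the paper obtains it by encoding Property~\ref{property:decodability} directly into a tripartite graph in which every edge lies in a unique triangle, then applying the Triangle Removal Lemma (\Cref{theorem:triangleRemoval}) together with the Pl\"unnecke--Ruzsa/Ruzsa-triangle bound $|P+Q-P-Q|\le K^7n$ of \Cref{lemma:size_bound}. The Freiman step you propose, even if carried out, would in the end require Roth's theorem (or the removal lemma) to handle constant-density subsets of a bounded-dimension GAP, so it is a detour rather than an alternative source of the superlinear gain. As written, your proposal is an incomplete plan with its decisive step missing, whereas the paper's argument is a direct removal-lemma proof in the spirit of Ruzsa--Szemer\'edi.
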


It is easy to see that Theorem~\ref{theorem:lower} is sufficient for proving Theorem~\ref{thm:x}, as follows.

\begin{proof}[\textbf{Proof of Theorem~\ref{thm:x} given Theorem~\ref{theorem:lower}}]
Let $P=\{p_0,p_1,\dots, p_{n-1}\}$ and $Q=\{q_0,q_1,\dots, q_{n-1}\}$ be sets of integers that define a Rook Code for batch matrix multiplication in the master-worker setting. In particular, Property~\ref{property:decodability} holds, and hence by Theorem~\ref{theorem:lower}, we have that $|P+Q|=\omega(n)$. This implies that $L_{P,Q} = |P+Q| = \omega(n)$ and thus $\Recovery_{\text{Rook-Codes}}(n) = \min_{P,Q}\{L_{P,Q}\} = \omega(n)$, as claimed.
\end{proof}

~\\\textbf{\underline{Upper Bound.}} Our second contribution is that there exist Rook Codes for batch matrix multiplication in the master-worker setting that asymptotically get very close to the bound in \Cref{thm:x}. We prove the following (Section~\ref{sec:UB}).

\begin{restatable}[Upper Bound]{theorem}{ThmUB}
\label{thm:y}
    There exist Rook Codes for batch matrix multiplication in the master-worker setting with a recovery threshold of $ne^{O(\sqrt{\log n})}$. That is, 
    $$\Recovery_\text{Rook-Codes}(n) = n^{1+o(1)}.$$
\end{restatable}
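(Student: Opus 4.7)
The plan is to adapt Behrend's classical sphere-based construction. Fix integer parameters $B$ and $d$ (to be optimized below) and consider the grid $[B]^d := \{0,1,\ldots,B-1\}^d$. Partition these $B^d$ lattice points according to their squared Euclidean norm $\|v\|_2^2$, which takes at most $d(B-1)^2 + 1$ distinct values. By pigeonhole, some sphere $S := \{v \in [B]^d : \|v\|_2^2 = r\}$ satisfies $|S| \ge B^d/(d(B-1)^2 + 1)$. Embed each $v \in S$ into $\mathbb{Z}_{\ge 0}$ via $\phi(v) := \sum_{j=0}^{d-1} v_j (2B-1)^j$, i.e., treat the coordinates of $v$ as the digits of a base-$(2B-1)$ integer. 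Define $P = Q := \phi(S)$, trimming to any $n$-element subset if $|S| > n$ (trimming only shrinks the sumset).

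Two things then need verification. First, decodability: suppose $\phi(u) + \phi(v) = 2\phi(w)$ for some $u,v,w \in S$. Since each coordinate of $u+v$ and of $2w$ lies in $\{0,1,\ldots,2B-2\}$, the base-$(2B-1)$ encoding is injective on such digit strings, so $u+v = 2w$ as vectors in $\mathbb{Z}^d$. The parallelogram identity then yields
$$\|u-v\|_2^2 \;=\; 2\|u\|_2^2 + 2\|v\|_2^2 - \|u+v\|_2^2 \;=\; 4r - 4r \;=\; 0,$$
forcing $u = v$ and hence $u = v = w$. This establishes Property~\ref{property:decodability} for the diagonal choice $P = Q$.

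Second, counting: the integer sumset $P+P$ is in bijection (via $\phi(u) + \phi(v) \leftrightarrow u+v$) with the vector sumset $S+S \subseteq \{0,1,\ldots,2B-2\}^d$, so $|P+P| \le (2B-1)^d$. Combined with the lower bound on $|S|$, the ratio satisfies $|P+P|/|P| \le 2^d \cdot d \cdot B^2$. To guarantee $|P| \ge n$ it suffices to take $d := \lceil \log n / \log B \rceil + O(1)$. Standard optimization of $\log(2^d d B^2) \approx (\log 2)\log n / \log B + 2\log B$ over $\log B$ pinpoints the optimum at $\log B \asymp \sqrt{\log n}$, where the expression is $\Theta(\sqrt{\log n})$. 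This gives $|P+P| \le n \cdot e^{O(\sqrt{\log n})} = n^{1+o(1)}$, as required.

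The main obstacle is not a deep structural result — Behrend's pigeonhole and the parallelogram identity are both classical — but rather choosing $B$ and $d$ so that three constraints hold simultaneously: (i) the sphere $S$ contains at least $n$ lattice points, (ii) the base-$(2B-1)$ encoding is injective on pairwise sums (which forces the base to be at least $2B-1$), and (iii) Property~\ref{property:decodability} holds (which becomes automatic once (ii) holds, by the parallelogram computation above). A conceptually important point is that Property~\ref{property:decodability} only constrains the diagonal, which permits the symmetric choice $P = Q$ and reduces the Rook-Code question to the geometric question of avoiding $u + v = 2w$ on a sphere — precisely the scenario Behrend's construction was designed to handle.
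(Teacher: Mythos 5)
Your proposal is correct and matches the paper's approach: both set $P=Q$ equal to a Behrend-type $3$-AP-free set and derive decodability from $3$-AP-freeness (which, as you note, is exactly what Property~\ref{property:decodability} reduces to on the diagonal choice $P=Q$). The only presentational difference is that the paper invokes Behrend's theorem as a black box and bounds $|P+Q|$ by its maximum element ($\le 2\max A = ne^{O(\sqrt{\log n})}$), whereas you inline the sphere-plus-pigeonhole construction and bound $|P+P|$ directly by $(2B-1)^d$ via the base-$(2B-1)$ digit argument and the parallelogram identity; these are the same estimate up to constants, so the two write-ups are substantively identical.
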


\subsection{Related Work}
\label{subsec:relatedWork}
The use of coding for matrix-vector and vector-vector multiplication in the master-worker setting was first established concurrently in~\cite{LLPPR2018, yma17, dfhjcg2020}. Shortly after, efficient constructions were given for general matrix-matrix multiplication~\cite{YAA2018, DBJMG2018}. 
In particular, \cite{YAA2018} established the equivalence of the recovery threshold for general partitions to the tensor rank of matrix multiplication.  Further, \cite{tldk17,sigl22} considered coding for gradient descent in a master-worker setting for machine learning applications. 
Batch matrix multiplication was studied using Rook Codes~\cite{sl20, sfsl22}, Lagrange Coded Computation (LCC) \cite{ylrksa19}, and Cross Subspace Alignment (CSA) codes \cite{jj21}.
Follow-up works about LCC consider numerically stable extensions \cite{fc21, sma21} and private polynomial computation \cite{rk20}. There have also been follow-up works on CSA codes which include extending them to obtain security guarantees \cite{cgw21,jj21a}.

It is important to mention that LCC and CSA codes have recovery thresholds of $2n-1$, and so, at a first glance, it may seem that LCC or CSA Codes already achieve a linear recovery threshold and thus outperform Rook Codes. However, an important aspect in which the Rook Codes that are considered in our work are powerful is in their computational complexity. 
The work in \cite{sfsl22} supplies experimental evidence that the encoding step for Rook Codes is computationally more efficient. In Section~\ref{appx:parr_and_comp}, we give detailed descriptions of LCC and LCA codes, and provide some mathematical intuition as to why this may be the case.

In \cite{oek20,oehk21}, additive combinatorics is used to construct and analyze coded matrix multiplication algorithms. 
However, there are two key differences between these works and ours. The first is that they consider the special case of outer product of two block matrix vectors, which corresponds to batch matrix multiplication but with dependencies between the pairs of matrices which can be exploited; in particular, 
their constraint is simpler than the decodability property we need (Property~\ref{property:decodability}).
The second difference is that they consider security of the computation. Hence, our results are incomparable.

\subsection{Roadmap} The following section shortly contains the notation for the additive combinatorics setup. Section~\ref{sec:LB} contains the proof of~\Cref{theorem:lower}, which we proved above to imply the lower bound of~\Cref{thm:x}. Section~\ref{sec:UB} proves the upper bound of~\Cref{thm:y}. We conclude in Section~\ref{appx:parr_and_comp} with a discussion of the computational efficiency that motivates our study of the recovery threshold of Rook Codes.

\section{Notation}
For two sets of integers $A$ and $B$, the set $A+B$ is defined as $$A+B := \{a+b ~|~ a\in A, b\in B\},$$ and the set $A-B$ is defined as $$A-B := \{a-b ~|~ a\in A, b\in B\}.$$ Note that when $A=B$ we still have the same definitions, so for example $A-A = \{a-b ~|~ a\in A, b\in A\}$, which is the set of all integers that are a subtraction of one element of $A$ from another. 

For a set of integers $A$ and an integer $k$, the set $kA$ is defined as $$kA := A + A + \dots + A = \{a_0+a_1+\dots+a_{k-1} ~|~ \forall 0\leq i \leq k-1, a_i \in A\}.$$

\section{An $\omega(\nummats)$ Lower Bound on the Recovery Threshold}
\label{sec:LB}

In this section we prove that no Rook Code for batch matrix multiplication can achieve a strictly linear recovery threshold. 
\ThmLB*

As proven in the introduction, the following theorem implies~\Cref{thm:x}, and the remainder of this section is dedicated to its proof. 

\ThmLBviaAC*

\Cref{theorem:lower} is closely related to Freiman's Theorem in~\cite{Freiman} (see also~\cite[Theorem 1.3, Chapter 2]{Ruzsa2009}), a deep result in Additive Combinatorics which essentially states that a set $P$ has a small \emph{doubling constant} (i.e., the size of $|P+P|$ is only a constant multiple of $|P|$) if and only if $P$ \emph{is similar to} an arithmetic progression, for a well defined notion of similarity. Our theorem is comparable to Freiman's Theorem in the following sense. The contrapositive version of \Cref{theorem:lower} states that if $|P+Q|$ is a constant multiple of $|P|$, then $P+Q$ \emph{looks like} an arithmetic progression (in a different way compared to Freiman's similarity notion), in the sense that there exists $i,j,k$ with $(i,j)\neq (k,k)$ such that $p_k+q_k = p_i+q_j$ (note that when $P=Q$, this implies that $P$ contains a $3$-term arithmetic progression, or $3$-AP for short). In order to prove Theorem~\ref{theorem:lower}, we will invoke the following theorems from Additive Combinatorics.

\begin{theorem}[Ruzsa's Triangle Inequality {\cite[Lemma 2.6]{tao_vu_2006}}]
\label{thm:ruzsa}
If $A,B,C$ are finite subsets of integers, then
    $$|A||B-C|\leq |A-B||A-C|.$$
\end{theorem}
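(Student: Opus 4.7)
The plan is to prove the inequality by constructing an explicit injection from $A \times (B-C)$ into $(A-B) \times (A-C)$; comparing cardinalities then yields the claim immediately. This is the classical ``Ruzsa covering'' style argument, and its success depends on choosing the map so that the inverse can be read off from the image together with the freedom inherent in the difference sets.

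First I would fix, for each $d \in B-C$, a distinguished representation $d = b_d - c_d$ with $b_d \in B$ and $c_d \in C$ (such a representation exists by definition of $B-C$, and we just commit to one choice per $d$; since all sets are finite, this is entirely elementary). Then I would define
\[
\phi : A \times (B-C) \longrightarrow (A-B) \times (A-C), \qquad \phi(a,d) := \bigl(a - b_d,\ a - c_d\bigr).
\]
It is immediate that $\phi$ takes values in $(A-B) \times (A-C)$, since $a \in A$, $b_d \in B$ and $c_d \in C$.

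The main (and only nontrivial) step is injectivity. The key observation is that the map forgets $a$ and $d$ separately, but it records enough to recover $d$ first and then $a$. Concretely, if $\phi(a,d) = (x,y)$, then
\[
y - x = (a - c_d) - (a - b_d) = b_d - c_d = d,
\]
so $d$ is determined by the output. Once $d$ is known, the fixed choice function returns $b_d$, and then $a = x + b_d$ is determined as well. Hence $\phi$ is injective, and so
\[
|A|\cdot|B-C| = \bigl|A \times (B-C)\bigr| \le \bigl|(A-B) \times (A-C)\bigr| = |A-B|\cdot|A-C|.
\]

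There is no real obstacle here beyond finding the correct map; the cleverness lies entirely in baking the fixed representative $(b_d, c_d)$ into $\phi$ so that the difference $y - x$ strips away $a$ and exposes $d$, allowing the two coordinates of the output to be ``decoded'' in the right order. I would present the argument in exactly this sequence: fix representatives, define $\phi$, verify the codomain, and then recover $d$ and $a$ from $(x,y)$ to conclude injectivity.
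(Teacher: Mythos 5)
Your proof is correct and is the standard argument for Ruzsa's triangle inequality (the injection $\phi(a,d)=(a-b_d,\,a-c_d)$ with a fixed representative for each $d\in B-C$, recovering $d=y-x$ and then $a=x+b_d$). The paper itself does not prove this statement --- it cites it as \cite[Lemma~2.6]{tao_vu_2006} --- and your argument is essentially the one given there, so there is nothing to reconcile.
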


\begin{theorem}[The Plünnecke-Ruzsa Inequality {\cite[Theorem 1.2]{Petridis_2012}}]
\label{thm:PR}
Let $A$ and $B$ be finite subsets of integers and let $K$ be a constant satisfying the following inequality: 
$$|A+B|\le K|A|.$$ 
Then for all integers $r,\ell\ge 0$,
$$|rB-\ell B|\leq K^{r+\ell}|A|.$$
\end{theorem}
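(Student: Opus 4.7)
The plan is to adapt Petridis's proof of the Plünnecke--Ruzsa inequality, which breaks into three stages: (i) a combinatorial key lemma for a cleverly chosen ``compressed'' subset of $A$; (ii) iterating that lemma to obtain the sumset bound $|X+rB|\le {K'}^{r}|X|$; and (iii) invoking Ruzsa's triangle inequality (\Cref{thm:ruzsa}) to pass from sumsets to difference sets. Fix a nonempty $X\subseteq A$ minimizing the ratio $|X+B|/|X|$, and denote this minimum by $K'$. Since $A$ itself is a candidate in the minimization, $K'\le K$, and this is ultimately what yields the right exponent.

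The central step is the key lemma: for every finite set $C$, $|X+B+C|\le K'|X+C|$. I would prove this by induction on $|C|$. The base case $|C|=1$ is immediate from the definition of $K'$. For the inductive step, write $C=C'\sqcup\{\gamma\}$ with $\gamma\notin C'$, and define
$$E:=\{x\in X : x+\gamma\in X+C'\}.$$
The crucial inclusion $E+B+\gamma\subseteq X+B+C'$ holds because if $x+\gamma=x'+c'$ for some $x'\in X$ and $c'\in C'$, then $x+b+\gamma=x'+b+c'\in X+B+C'$ for every $b\in B$. Therefore
$$|X+B+C|\le|X+B+C'|+|X+B|-|E+B|,$$
and an analogous count, using $(X+C')\cap(X+\gamma)=E+\gamma$, yields the identity
$$|X+C|=|X+C'|+|X|-|E|.$$
Now combine the induction hypothesis $|X+B+C'|\le K'|X+C'|$ with the minimality estimate $|E+B|\ge K'|E|$ (which holds because $E\subseteq X\subseteq A$, and is trivial if $E=\emptyset$); the first display becomes $|X+B+C|\le K'(|X+C'|+|X|-|E|)=K'|X+C|$, closing the induction.

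Iterating the key lemma with $C=rB$ gives $|X+rB|\le{K'}^{r}|X|$ for every $r\ge 0$ by induction on $r$. Finally, I would apply \Cref{thm:ruzsa} with $-X$ in place of its first argument (valid in the integers, since $|-X|=|X|$ and $|-X-Y|=|X+Y|$) to deduce $|X|\cdot|rB-\ell B|\le|X+rB|\cdot|X+\ell B|$. Plugging in the sumset bound gives
$$|rB-\ell B|\le\frac{|X+rB|\cdot|X+\ell B|}{|X|}\le\frac{{K'}^{r}|X|\cdot {K'}^{\ell}|X|}{|X|}={K'}^{r+\ell}|X|\le K^{r+\ell}|A|,$$
using $K'\le K$ and $|X|\le|A|$. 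The main obstacle is the inductive step of the key lemma: the same set $E$ must simultaneously control the overcount in $(X+B+C')\cup(X+B+\gamma)$ and in $(X+C')\cup(X+\gamma)$, and it is essential that the minimality of $K'$ provides the \emph{lower} bound $|E+B|\ge K'|E|$ in precisely the right way to turn what looks like a wrong-direction estimate into the desired upper bound on $|X+B+C|$; getting this bookkeeping correct is the delicate heart of the argument.
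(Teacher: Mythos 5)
Your proof is correct: it is a faithful reconstruction of Petridis's argument, which is exactly the source the paper cites for this theorem (the paper itself states \Cref{thm:PR} without proof). The key lemma, its induction on $|C|$ via the set $E$, the iteration to $|X+rB|\le {K'}^{r}|X|$, and the final application of \Cref{thm:ruzsa} with $-X$ are all handled correctly, including the crucial point that the minimality of $K'$ supplies the lower bound $|E+B|\ge K'|E|$ needed to close the induction.
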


We will also need the following lemma.
\begin{lemma}
\label{lemma:size_bound}
Let $P,Q\subseteq \mathbb{Z}$ such that $|P|=|Q|=n$ and $|P+ Q|\le K|P|=Kn$ for some constant $K$. Then $$|P+ Q- P- Q|\le K^7 n.$$
\end{lemma}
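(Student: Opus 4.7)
The plan is to observe that $P+Q-P-Q=(P-P)+(Q-Q)$ and bound the right-hand side by two successive applications of Ruzsa's triangle inequality, handling $P-P$ and $Q-Q$ as single atomic sets rather than expanding them into $P+(-P)$ and $Q+(-Q)$.

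First I will record two preparatory facts. Substituting $C\mapsto -C$ in the stated form of Ruzsa's triangle inequality immediately yields the sum-version $|A|\cdot|B+C|\le |A-B|\cdot|A+C|$, which is the form I will actually use. Also, because the hypothesis $|P+Q|\le Kn$ holds symmetrically in $P$ and $Q$ (i.e., both $|P+Q|\le K|P|$ and $|Q+P|\le K|Q|$), the Plünnecke-Ruzsa inequality applies with either set as the pivot; taking $r=2$, $\ell=1$ in each direction yields the two building blocks
\[
|2P-P|\le K^3 n \quad\text{and}\quad |2Q-Q|\le K^3 n.
\]

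I will then apply the sum-form of Ruzsa's triangle inequality twice. The first application takes $A=P$, $B=P-P$, $C=Q$; noting that $P-(P-P)=2P-P$, the inequality together with the bound on $|2P-P|$ gives $|(P-P)+Q|\le K^4 n$. The second application takes $A=Q$, $B=P-P$, $C=Q-Q$; since $P-P$ is symmetric under negation, we have $Q-(P-P)=(P-P)+Q$, which is bounded by $K^4 n$ from the first step, and $Q+(Q-Q)=2Q-Q$ is bounded by $K^3 n$. Multiplying these gives the target inequality
\[
|(P-P)+(Q-Q)|\le \frac{K^4 n\cdot K^3 n}{n}=K^7 n.
\]

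The main subtlety I expect is calibrating the exponent so that it actually lands at $7$ rather than $8$. A straightforward multi-set Plünnecke-Ruzsa applied to the four separate sets $P,-P,Q,-Q$ with pivot $P$ only yields $K^8 n$, essentially because recombining $P$ with $-P$ through the pivot costs $|P+P|\cdot|P-P|/|P|=K^4 n$ instead of the true value $|P-P|\le K^2 n$. The trick of carrying $P-P$ (and $Q-Q$) as atomic sets in an iterated two-set triangle argument is what recovers the missing factor of $K$.
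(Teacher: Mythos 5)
Your proof is correct and takes essentially the same approach as the paper's: two applications of Ruzsa's triangle inequality, each treating $P-P$ or $Q-Q$ as an atomic set, combined with Plünnecke--Ruzsa to bound $|2P-P|$ and $|2Q-Q|$ by $K^3n$. The only cosmetic differences are the order of the two Ruzsa applications and that your intermediate quantity is $|(P-P)+Q|$ whereas the paper's is $|P+Q-Q|$; both are bounded by $K^4n$ and yield the same $K^7n$ in the end.
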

\begin{proof}
First, using Ruzsa's Triangle Inequality from \Cref{thm:ruzsa} with $A=P,B=P-P,$ and $ C=Q-Q$, we get that 
\begin{equation}\label{eq_first_ruzsa_triangle}
    |P+ Q-P-Q|\le |P-P+P||P+Q-Q|/|P|.    
\end{equation}

Next, using the Plünnecke-Ruzsa Inequality from \Cref{thm:PR}, where $A=Q$, $B=P$, $r=2,\ell =1$ and the condition $|P+Q|\le K|Q|$ (recall that $|P|=|Q|$), we get that 
\begin{equation}\label{eq_bound_set}
    |P-P+P|=|2P-P|\le K^{2+1}|Q|=K^3|P|.
\end{equation}
Using symmetry of $P$ and $Q$, we can similarly prove that 
$$|Q-Q+Q|\le K^3|Q|.$$
Hence, combining Equation (\ref{eq_first_ruzsa_triangle}) and Equation (\ref{eq_bound_set}) together, we get that $|P+ Q-P-Q|$ is bounded by
$$|P+ Q-P-Q|\le |P-P+P||P+Q-Q|/|P|\le K^{3}|P||P+Q-Q|/|P|=K^{3}|P+Q-Q|.$$
Using Ruzsa's Triangle Inequality again, this time with $A=Q, B=-P$ and $C=Q-Q$, and then using the fact that $|Q+Q-Q|\le K^3 |Q|$, we get that:
$$|P+Q-Q|=|-P+Q-Q|\le |P+Q||Q+Q-Q|/|Q|\le |P+Q|K^3 \le K^{4}n.$$

Plugging this into the bound on $|P+ Q-P-Q|$ gives that 
$|P+Q-P-Q|\le K^3 |P+Q-Q|\le K^7 n$, as claimed.
\end{proof}

Now we are ready to prove Theorem~\ref{theorem:lower}, for which we will also use the following result:

\begin{theorem}[The Triangle Removal Lemma {\cite[Lemma 10.46]{tao_vu_2006}}]
\label{theorem:triangleRemoval}
Let $G=G(V,E)$ be a graph which contains at most $o(|V|^3)$ triangles. Then it is possible to remove $o(|V|^2)$ edges from $G$ to obtain a graph which is triangle-free.
\end{theorem}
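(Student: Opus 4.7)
The plan is to derive the Triangle Removal Lemma from Szemerédi's Regularity Lemma together with the standard Counting Lemma for triangles, arguing by contrapositive. That is, I will show that any graph $G$ on $n = |V|$ vertices which cannot be made triangle-free by removing $o(n^2)$ edges must already contain $\Omega(n^3)$ triangles, contradicting the hypothesis that it has only $o(n^3)$ of them.

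First, fix a small parameter $\varepsilon > 0$ and apply Szemerédi's Regularity Lemma with regularity parameter on the order of $\varepsilon$ to obtain an equitable partition $V = V_1 \cup \cdots \cup V_M$ such that all but at most $\varepsilon M^2$ of the pairs $(V_i,V_j)$ are $\varepsilon$-regular, where $M = M(\varepsilon)$ depends only on $\varepsilon$. Next, perform the standard cleanup: delete every edge that lies inside some single part $V_i$ (losing at most $n^2/M$ edges), every edge that lies between a non-regular pair (losing at most $\varepsilon n^2$ edges), and every edge that lies between a regular pair whose density is below a threshold $\delta$ to be chosen (losing at most $\delta n^2$ edges). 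By taking $M$ large enough and then $\delta$ small enough compared to $\varepsilon$, the total number of edges deleted in the cleanup is at most $3\varepsilon n^2$.

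The heart of the argument is then the Counting Lemma. If any triangle survives the cleanup, with vertices in parts $V_i, V_j, V_k$, then each of the three pairs $(V_i,V_j),(V_i,V_k),(V_j,V_k)$ is $\varepsilon$-regular with density at least $\delta$. A standard slicing argument using the definition of $\varepsilon$-regularity shows that the number of triangles of $G$ with exactly one vertex in each of these three parts is at least $(\delta/2)^3 |V_i||V_j||V_k| = \Omega(\delta^3 n^3/M^3)$, which is $\Omega_{\varepsilon}(n^3)$ since $M$ and $\delta$ depend only on $\varepsilon$. Consequently, if $G$ has only $o(n^3)$ triangles, then for $n$ large enough no triangle can survive, so deleting $3\varepsilon n^2$ edges already yields a triangle-free graph; since $\varepsilon$ was arbitrary, the total deletion cost is $o(n^2)$.

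The main obstacle is the invocation of the Regularity Lemma itself, whose tower-type quantitative bounds cause the implicit dependence between the hidden constants in $o(n^3)$ and $o(n^2)$ to be extraordinarily weak; avoiding this tower is a famous open problem rather than a matter of careful bookkeeping. Beyond that, the only delicate point in the plan is choosing $\varepsilon, \delta,$ and $M$ in the right order so that the three cleanup losses genuinely sum to $o(n^2)$ while the Counting Lemma still delivers $\Omega(n^3)$ triangles per surviving regular triple.
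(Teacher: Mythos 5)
The paper does not prove this statement at all: it is imported as a black-box citation (Lemma 10.46 of Tao--Vu), so there is no in-paper argument to compare yours against. Your outline is the standard Ruzsa--Szemer\'edi proof via the Regularity Lemma plus the triangle Counting Lemma, and as a proof strategy it is correct and complete at the level of detail one would expect for this classical result.

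One internal inconsistency to fix: you ask for the density threshold $\delta$ to be ``small enough compared to $\varepsilon$'' so that the cleanup cost stays below $3\varepsilon n^2$, but the Counting Lemma needs the opposite inequality. An $\varepsilon$-regular pair of density $\delta < \varepsilon$ can have empty bipartite graphs between subsets of size $\varepsilon|V_i|$, so the slicing argument yields no triangles unless the surviving pairs have density comfortably above the regularity parameter --- the standard choice is $\delta = 2\varepsilon$. With that choice the cleanup loses at most roughly $4\varepsilon n^2$ edges (still $o(n^2)$ after letting $\varepsilon\to 0$), and each surviving triple of parts contributes at least $(1-2\varepsilon)(\delta-\varepsilon)^3(n/M)^3 = \Omega_\varepsilon(n^3)$ triangles, which is what the contrapositive needs. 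This is a quantifier-ordering slip rather than a conceptual gap, but as written the two halves of your argument place incompatible demands on $\delta$.
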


\textbf{Overview:}
First we will assume for contradiction that there exists some constant $K$ such that for all $n$, there exists $P,Q$ with $|P|=|Q|=n$ such that $|P+Q|\le Kn$ and $P,Q$ satisfy Property~\ref{property:decodability}. From this assumption, we will construct a tripartite graph $G=(V,E)$ with $V=(X,Y,Z)$ such that every edge is in exactly one triangle in $G$. Each of the three vertex sets will have size $N=|P+Q-P-Q|$ and the number of edges between each pair of sets is at least $Nn/4$. Because every edge is in exactly one triangle, we have that the number of triangles is exactly $|E|/3$, and thus it is $o(|V|^3)$. By using the Triangle Removal Lemma from \Cref{theorem:triangleRemoval}, it follows that the number of edges that need to be removed to make the graph triangle-free is at most $o(|V|^2)=o(N^2)$. Since this naturally also bounds the total number of edges $|E|$, which is at least $3Nn/4$, we obtain that  
$3Nn/4\leq o(|V|^2)=o(N^2)$ and 
hence $n=o(N)$. But from \Cref{lemma:size_bound}, $N$ is bounded by a constant multiple of $n$, hence we get that $n=o(N)=o(n)$ which is a contradiction. The main technical challenge of this proof is in constructing a map between the set $|P+Q|$ and vertices of the graph in such a way so that the condition of Property~\ref{property:decodability}, that $p_k+q_k =p_i+q_j $ if and only if $i=k,j=k$, translates into the condition that every edge is in exactly one triangle. Therefore, the first part of this proof will be about defining a map $\psi$ with the desired properties, as presented in the proof of~\cite[Theorem 3.5, Chapter 2]{Ruzsa2009}.

\begin{proof}[Proof of \Cref{theorem:lower}]
Assume for contradiction that $|P+Q|=K|P|$. Let us call $N=|P+Q-P-Q|$. 

First, we construct $\phi:\mathbb{Z}\rightarrow [q]$ for a prime  $q>\max(P+Q-P-Q)$, such that if $x\in P+Q-P-Q$ and $\phi(x)=0$, then $x=0$ (we do not simply define $\phi(x)=x\mod q$, since we need $\phi$ to satisfy additional properties). We do this in the following way.

For each $\lambda\in[q]$, consider the map 
$$ \phi_{\lambda}: \mathbb{Z} \xrightarrow{\mathrm{mod}\ q} \mathbb{Z}/q\mathbb{Z} \xrightarrow{\times \lambda} \mathbb{Z}/q\mathbb{Z}\xrightarrow{(\mathrm{mod}\ q)^{-1}}[q].$$
It is important to note that (mod $q)$ and $\times \lambda$ are group homomorphisms. Therefore, as the operation mod $q$ restricted to $[q]$ is the identity map, $\phi_\lambda$ has the following property~\cite{Ruzsa2009}:

\begin{property}
\label{property:sum}
If $a_i,x_i\in \mathbb{Z}$ and $0 \leq \sum_{i=1}^n a_i \phi_{\lambda}(x_i)<q$ then $\sum_{i=1}^n a_i \phi_{\lambda}(x_i)=\phi_{\lambda}(\sum_{i=1}^n a_i x_i)$.
\end{property}

For each $x\neq 0$, $\phi_\lambda (x)$ takes on the values of $[q]$ with equal probability over all $\lambda\in [q]$. Therefore, for each $x\in (P+Q-P-Q)\setminus \{0\}$, the probability over all $\lambda \in [q]$ that $N$ divides $\phi_{\lambda}(x)$ is at most $1/N$. As there are $N-1$ elements in $(P+Q-P-Q)\setminus \{0\}$, by the union bound, we get:
\begin{eqnarray*}
  &&\Pr_{\lambda\in [q]}\left(\forall x\in (P+Q-P-Q)\setminus \{0\}, N \text{ does not divide } \phi_\lambda (x)\right)\\
  &\ge& 1- \left(\sum_{x\in (P+Q-P-Q)\setminus \{0\} }\Pr_{\lambda \in [q]}\left(  \text{$N$ divides }\phi_\lambda (x)\right)\right)\\
  &\ge& 1- \frac{N-1}{N}=1/N>0. 
\end{eqnarray*}

This implies that there exists $\lambda\in [q]$ such that for all $x\in P+Q-P-Q\setminus\{0\}$ it holds that $N$ does not divide $\phi_{\lambda} (x)$. We fix $\lambda$ to be such a constant, and set $\phi=\phi_{\lambda}$. 
Now, we construct a set $T'\subseteq [n]$ so that we have the following property 
\begin{property}
\label{property:set}
For all $i,j,k\in T'$, we have:
$|\phi(p_k)-\phi(p_i)+\phi(q_k)-\phi(q_j)|\le q$.
\end{property}
In order to construct $T'$, we first pick one of the two intervals $[0,\frac{q-1}{2}]$ and $[\frac{q+1}{2},q-1]$ to be $I$ so that $T=\{ k\in \{1\ldots n\}:\phi(p_k)\in I\}$ and $|T|\ge n/2$. We then pick an interval $J$ to be one of  $[0,\frac{q-1}{2}]$ and $[\frac{q+1}{2},q-1]$ so that $T'=\{k\in T:\phi(q_k)\in J$\} and $|T'|\ge |T|/2\ge n/4$.
We now compose $\phi$ with (mod $N$) to obtain a mapping $\psi$, as follows.
    $$ \psi: \mathbb{Z} \xrightarrow{\phi} \{0,1\ldots q-1\} \xrightarrow{\text{mod N}} \mathbb{Z}/N\mathbb{Z}.$$

Now, we construct a tripartite graph $G$ whose vertex set is $V=X\dot{\cup} Y\dot{\cup} Z$,where $X,Y,Z$ are all copies of $\mathbb{Z}/N\mathbb{Z}$. The edges of the graph are defined as follows.
\begin{itemize}
    \item $(x,y)\in X\times Y$ is in $E(G)$ if and only if there exists $i\in T'$ such that $y-x= \psi(p_i)$.
    \item $(y,z)\in Y\times Z$ is in $E(G)$ if and only if there exists $i\in T'$ such that $z-y= \psi(q_i)$.
    \item $(x,z)\in X\times Z$ is in $E(G)$ if and only if there exists $i\in T'$ such that $z-x=\psi(p_i)+\psi(q_i)$ mod $N$.
\end{itemize}
Note that $|V(G)|=|X|+|Y|+|Z|=3N$, and $|E(G)|=3N|T'|\ge \frac{3}{4}Nn$. If $x,y,z$ form a triangle, then there exist $i,j,k$ such that $y-x=\psi(p_i)$, $z-y=\psi(q_j)$, and $z-x=\psi(p_k)+\psi(q_k)$. We claim that in this case it must hold that $i=j=k$, which we prove as follows. Since $(z-y)+(y-x)=z-x$, we have that $\psi(p_k)+\psi(q_k)=\psi(p_i)+\psi(q_j)$, which means that $\psi(p_k)+\psi(q_k)-\psi(p_i)-\psi(q_j)=0$. Since $\psi$ applies a (mod $N$) function to $\phi$, this means that $N$ divides $\phi(p_k)+\phi(q_k)-\phi(p_i)-\phi(q_j)$. By Property~\ref{property:set} and our choice of $T'$, it holds that $\phi(p_k)+\phi(q_k)-\phi(p_i)-\phi(q_j) \leq q$, and hence by Property~\ref{property:sum}, we have that $\phi(p_k)+\phi(q_k)-\phi(p_i)-\phi(q_j) = \phi(p_k+q_k-p_i-q_j)$. This implies that $N$ divides $\phi(p_k+q_k-p_i-q_j)$, which only holds if $p_k+q_k-p_i-q_j = 0$ due to our choice of $\lambda$.

Since such a triangle implies $i=j=k$, this shows that each edge is in exactly one triangle hence, and so $Nn/4$ edges need to be removed to make $G$ triangle free. The number of triangles is $Nn/4=O(N^2)=o(N^3)$ and therefore, by the Triangle Removal Lemma from \Cref{theorem:triangleRemoval}, one can remove at most $o(N^2)$ edges to make $G$ triangle free. This implies that $Nn/4=o(N^2)$, and hence $n=o(N)$. However, we proved in \Cref{lemma:size_bound} that $N\le K^7n$, and hence $n\neq o(N)$ which is a contradiction. Therefore, our initial assumption does not hold, i.e., there is no constant $K$ such that $|P+Q|\le K|P|$ for all $n=|P|$. This implies that $|P+Q|=\omega(n)$ as claimed.
\end{proof}

\section{An $\nummats e^{O(\sqrt{\log{n}})}$ Upper Bound for the Recovery Threshold}
\label{sec:UB}

In this section, we show how to obtain Rook Codes for batch matrix multiplication in the master-worker setting with a near-optimal recovery threshold.
\ThmUB*

To prove~\Cref{thm:y}, we show that there exist sets $P,Q$ with $|P+Q|=ne^{O(\sqrt{\log n})}$ such that the decodability property (Property~\ref{property:decodability}) holds. 

In order to show this, we use the following known construction of a set which does not contain any 3-term arithmetic progression, defined as follows.
\begin{definition}
\label{def:AP}
A set $A=\{a_1,\dots,a_\ell\}$ is an $\ell$-term arithmetic progression ($\ell$-AP) if there is a value $d$ such that $a_i = a_1 + (i-1)d$, for all $2\leq i \leq \ell$.
\end{definition}

Behrend~\cite{Behrend_1946} showed a construction of a set which does not contain any 3-AP, whose maximum element is bounded by a value which will be useful for us.

\begin{theorem}[Behrend~\cite{Behrend_1946}]
\label{thm:behrend}
    For all $n$, there exist a set $A$ of distinct positive integers of size $|A|=n$ which does not contain any 3-term arithmetic progressions (3-AP), and such that $\max_{a\in A} \{a\} = n e^{O(\sqrt{\log n})}$.
\end{theorem}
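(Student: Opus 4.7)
The plan is to reproduce Behrend's classical sphere-packing construction. The idea is to work with lattice vectors rather than integers first, exploit the strict convexity of the Euclidean sphere to produce a large $3$-AP-free set of vectors, and then encode the vectors as integers in a high base so that vector arithmetic is reflected faithfully in integer arithmetic.

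First I would fix parameters $b$ and $d$ (to be optimized later) and consider the lattice cube $S = \{0,1,\ldots,b-1\}^d$. Partition $S$ according to the squared Euclidean norm $\|x\|_2^2$, which takes at most $d(b-1)^2 + 1$ distinct integer values on $S$. Pigeonhole then yields some radius $r$ with $|S_r| \geq b^d / (d(b-1)^2 + 1) \geq b^{d-2}/d$. The key structural fact is that $S_r$ contains no nontrivial 3-AP: if $x,y,z \in S_r$ are distinct and $x+z = 2y$, then $y$ is the midpoint of the segment from $x$ to $z$, but the Euclidean sphere of radius $\sqrt{r}$ is strictly convex, so its midpoint lies strictly inside, contradicting $\|y\|_2 = \sqrt{r}$.

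Next I would transport $S_r$ to the integers via the base-$(2b)$ encoding $\iota(x_1,\dots,x_d) = \sum_i x_i (2b)^{i-1}$. Since each coordinate satisfies $x_i \leq b-1$, the digits never reach $2b$, so $\iota$ is injective; more importantly, for any $x,y,z \in S_r$ the coordinate sums $x_i + z_i$ and $2y_i$ are bounded by $2(b-1) < 2b$, so the base-$(2b)$ expansion of $\iota(x) + \iota(z)$ (resp.\ $2\iota(y)$) is carry-free and equals $\iota(x)+\iota(z) = 2\iota(y)$ iff $x_i + z_i = 2y_i$ for every $i$. Thus $A := \iota(S_r)$ is a 3-AP-free subset of $\{0,1,\ldots,(2b)^d - 1\}$ of size $\geq b^{d-2}/d$ (I will prune it down to exactly $n$ elements, which only reduces the maximum).

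Finally I would optimize: take $d \asymp \sqrt{\log n}$ and choose $b$ so that $b^{d-2}/d \geq n$, i.e.\ $b = \Theta\bigl((dn)^{1/(d-2)}\bigr)$. Then $\log b = \Theta(\sqrt{\log n})$ and $\max_{a\in A} a \leq (2b)^d = 2^d \cdot b^d \leq 2^d \cdot O(d^2 n \cdot b^2) = n \cdot e^{O(\sqrt{\log n})}$. The only nontrivial step is the parameter tuning: one must balance the two competing losses (the factor $d(b-1)^2$ from pigeonhole and the factor $(2b)^d$ from the encoding) against the gain $b^d$, which is exactly what fixes $d$ at the square-root scale. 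The geometric no-3-AP argument and the carry-free encoding are clean one-liners, so I expect the balancing of exponents to be the only delicate point.
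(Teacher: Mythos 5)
Your proposal is a correct reconstruction of Behrend's classical sphere-packing construction (lattice-shell pigeonhole, carry-free base-$(2b)$ encoding, strict convexity of the Euclidean sphere), which is precisely the result the paper cites. The paper itself supplies only the cosmetic conversion from Behrend's density form (a 3-AP-free subset of $\{1,\dots,N\}$ of size $N^{1-O(1/\sqrt{\log N})}$) to the ``for all $n$'' form in the statement, which corresponds to the parameter-tuning step at the end of your argument; everything checks out, including the exponent arithmetic giving $(2b)^d = n\,e^{O(\sqrt{\log n})}$ for $d \asymp \sqrt{\log n}$.
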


Note that the proof of~\Cref{thm:behrend} 
shows that there exists a 3-AP-free subset $A$ of $\{1,2,\ldots N \}$ of size at least $N^{1-\frac{2\sqrt{2\log 2} +\epsilon}{\sqrt{\log N}}}$ elements. By choosing $n=N^{1-\frac{2\sqrt{2\log 2} +\epsilon}{\sqrt{\log N}}}$,  we have that $\log{n}=\Theta(\log N)$. Therefore, expressing $N$ in terms of $n$ we get that $N=ne^{O(\sqrt{\log n})}$, hence for any $n$, we can construct a 3-AP-free set with $n$ elements such that $\max_{a\in A} \{a\} = n e^{O(\sqrt{\log n})}$, which is the form of the theorem presented here. With this construction, we can easily prove our upper bound on the recovery threshold of Rook Codes, as follows.

\begin{proof}[Proof of~\Cref{thm:y}]
Let $c$ be a suitable constant such that for each $n$, there is a set $A_n\subseteq [ne^{c\sqrt{\log n}}]$ of size $n$ which contains no 3-AP, as obtained by Behrend's construction in~\Cref{thm:behrend}. Denote $A_n=\{a_1,\ldots , a_n \}$.
Let $P,Q=A$ where $p_i=q_i=a_i$. This choice of $P$ and $Q$ satisfies the decodability property (Property~\ref{property:decodability}), because if $2p_i=p_k+q_j$ then $p_k,p_i,q_j$ is a 3-AP, hence $p_k,p_i,q_j$ must all be equal. Furthermore, $P+Q\subseteq \{1,2\ldots 2ne^{c\sqrt{\log n}}\}$, with a size of $ne^{O(\sqrt{\log n})}$. Therefore, $\Recovery_{\text{Rook-Codes}}(n)=ne^{O(\sqrt{\log n})}.$
\end{proof}

\section{The Computational Efficiency of Rook Codes }\label{appx:parr_and_comp}

We provide here the mathematical background that could explain the experimental results of~\cite{sfsl22}, which suggest that Rook Codes may have faster computational times. 

~\\\textbf{\underline{Coding for batch matrix multiplication.}} In~\cite{ylrksa19}, the method of Lagrange Coding Computation (LCC) has been suggested for batch matrix multiplication. 
In this approach, the master defines two polynomials $\tilde{A}(x)$ and $\tilde{B}(x)$, by invoking point evaluation over the input matrices. In more detail, $\tilde{A}(x)$ is defined to be the unique polynomial of degree at most $n-1$, which is interpolated from $n$ points $z_0,\dots,z_{n-1}$ whose values are $A_0,\dots,A_{n-1}$, respectively. That is, $\tilde{A}(z_i)=A_i$ for all $0\leq i \leq n-1$. Similarly, $\tilde{B}(x)$ is interpolated so that $\tilde{B}(z_i)=B_i$ for all $0\leq i \leq n-1$. Then, for $0\leq w \leq m-1$, the master sends to worker $w$ the coded matrices $\tilde{A}(x_w)$ and $\tilde{B}(x_w)$, for some value $x_w$. Each worker $w$ multiplies its two matrices and sends their product back to the master. Note that $(\tilde{A}\cdot\tilde{B})(x)=\tilde{A}(x)\cdot\tilde{B}(x)$ is a polynomial of degree at most $2n-2$, for which $(\tilde{A}\cdot\tilde{B})(z_i)=A_i\cdot B_i$ for all $0\leq i \leq n-1$. Thus, once the master receives back $\tilde{A}(x_w)\cdot\tilde{B}(x_w)$ from $2n-1$ workers, it can extract $A_i\cdot B_i$ for all $0\leq i \leq n-1$ by interpolating the polynomial $(\tilde{A}\cdot\tilde{B})(x)$ and evaluating it at $z_0,\dots,z_{n-1}$. This yields $\Recovery_{\text{Lagrange-Codes}}(n)\leq 2n-1$.

In~\cite{jj21}, a method of Cross Subspace Alignment (CSA) Coding for batch matrix multiplication is suggested, in which rational functions are used. 
More precisely, for a vector $\Vec{z} = (z_0,...,z_{\nummats-1})$ of $\nummats$ distinct values in a field, let
$f_{\Vec{z}} (x) = \prod_{i \in[\nummats]} ( z_i - x) = ( z_0 - x)\cdot ... \cdot ( z_{\nummats-1} - x)$, and define the rational encoding functions as $\tilde A (x ) = f_{\Vec{z}} (x)  \left(\sum_{i \in [\nummats]} \frac{1}{z_i - x}A_i  \right) $,
and
$\tilde B (x ) =  \sum_{i \in [\nummats]} \frac{1}{z_i - x}B_i  $.
Then it holds that 
$(\tilde A \cdot \tilde B) (x )  = \tilde A (x) \tilde B (x ) = \sum_{i \in [\nummats]} \frac{c_i(\vec{z})}{z_i - x}A_iB_i + \sum_{i \in [\nummats-1]} x^i N_i$,
where each $c_i(\vec{z})$ does not depend on $x$  
and each $N_i$ denotes a noise (matrix) coefficient that can be disregarded. The master computes 
$\tilde A (x_w)$ and $\tilde B(x_w)$ for some distinct $x_w$ that satisfy $\{z_i\}_{i \in [n]} \cap \{x_w\}_{w\in [m]} = \emptyset $ 
and sends it to worker $w$, for each $0\leq w\leq m-1$. Each worker $w$ then multiplies its two matrices and sends their product back to the master. Note that $(\tilde{A}\cdot\tilde{B})(x)=\tilde{A}(x)\tilde{B}(x)$ is a rational function with at most $2n-1$ zeros and $n$ poles\footnote{For any $ p=2n-1$ points $\{(x_i,y_i)\}_{i\in[p]}$ and $n$ numbers $\{(z_i)\}_{i\in[n]}$ such that $\{z_i\}_{i \in [v]} \cap \{x_k\}_{k\in [p]} = \emptyset $, there exists a unique rational function $h (x)= f(x)/g(x) $ such that $h(x_i) = y_i $, $g(z_i) = 0 $, $\mathrm{deg}(f) = 2n-1$, and $\mathrm{deg}(g) = n$. Notice that this is a natural generalization of polynomial interpolation (see, e.g., ~\cite[Lemma~1]{jj21}).}, 
and so the master can perform rational interpolation from the values sent to it from any $2n-1$ workers. Then the master retrieves $A_i\cdot B_i$ for all $0\leq i \leq n-1$, by taking the coefficient of $c_i(\vec{z})/z_i-x$.  
This yields $\Recovery_{\text{CSA-Codes}}(n)\leq 2n-1$.

~\\\textbf{\underline{A comparison with Rook Codes.}} 
As mentioned, at a first glance, it may seem that LCC or CSA Codes already achieve the desired linear recovery threshold. However, Rook Codes are powerful in their computational complexity, which motivates our work on bounding their recovery threshold. To see this, let us be more specific about the parameters of the setting, as follows. For all $0\leq i\leq n-1$, denote by $\chi\times\zeta $ the dimensions of the matrix $A_i$ and by $\zeta\times\upsilon$ the dimensions of the matrix $B_i$. 

\textbf{\underline{Encoding.}} For the encoding time, all three methods  (Rook Codes, LCC, and CSA Codes) must use some fast multipoint evaluation algorithm. In particular, Rook Codes and LCC use multipoint evaluation of a polynomial, and CSA Codes use a multipoint evaluation of rational functions, i.e., Cauchy matrix multiplication \cite{vg2013}.
All three methods can compute the $m$ values $\{ \tilde A(x_w), \tilde B(x_w) \}_{w \in [m]}$ in time
$O((\chi \zeta + \zeta\upsilon) \max\{ m,d\}\log^c\max\{ m,d\}) = O((\chi  + \upsilon)\zeta m\log^cm)$, where $c$ is some constant and $d$ is the degree of the polynomials $\Tilde{A}(x),\Tilde{B}(x)$ in Rook Codes and LCC, or the total number of their roots and poles in CSA~\cite{vg2013}.  

\textbf{\underline{Decoding in Rook Codes.}} For the decoding time, in Rook Codes, the master can decode in time $O(\chi  \upsilon R\log^c(R))$, for interpolating a polynomial over $\chi \times \upsilon$ matrices out of $R= \Recovery_{\text{Rook-Codes}}(n) $ points. 
This gives that the total computational time (for encoding and decoding) at the master is $O ( (\chi  + \upsilon)\zeta m \log^c (m) + \chi  \upsilon R\log^c(R))
$.
This decoding time is much smaller than the encoding time in some cases, for example when $\chi=\upsilon$ and $\zeta = \omega(\chi R\log^c{R}/(m\log^c{m}))$.
Since in such cases the encoding is the more expensive procedure, this motivates delegating the encoding to the workers in scenarios in which the master should do as little computation as possible (for example, when it is responsible for more than a single task). 
This is done by sending all matrices $A_i$ and $B_i$ for $0\leq i \leq n-1$ to all workers, and having each worker $w$ compute $A(x_w), B(x_w)$ (the values in $P,Q$ can be either also sent by the master or be agreed upon as part of the algorithm, depending on the setup).
While this clearly has an overhead in terms of communication, in some cases this overhead can be very small compared to the original schemes in which encoding takes place at the master -- as an example, if all workers are fully connected then the master can send $A_i, B_i$ to a different worker $w_i$ for each $i$, and each $w_i$ can forward these two matrices to all other workers concurrently.

The benefit of delegating the encoding to the workers is that it allows the master to avoid any computation for encoding, and each worker $w$ only has to evaluate the polynomial at its own $x_w$ value, which needs at most $(\chi+\upsilon)\zeta n + O(n \log^{1/2}(n))$ multiplications (see Section~\ref{appx:rook}). 
The $ O(n \log^{1/2}(n))$ term does not depend on $\chi, \upsilon, \zeta$ and thus, is negligible when the matrices have large dimensions.

\textbf{\underline{Decoding in LCC and LCA.}} The crucial difference between Rook Codes and LCC or CSA Codes, is that in Rook Codes, worker $w$ can locally encode $\tilde{A}(x_w),\tilde{B}(x_w)$ using division-free algorithm (i.e., an algorithm using only $+$ or $\cdot$, but not $/$); this stands to LCC and CSA Codes, which must perform divisions (by definition).
Furthermore, it is not known how to efficiently delegate the encoding procedure to the workers in LCC since the encoding functions are 
$\tilde A(x) = \sum_{i \in n} A_{i}\prod_{j \in [n] \setminus \{i\}}\frac{x - z_j}{z_i - z_j}$ and $\tilde B(x) = \sum_{i \in n} B_{i}\prod_{j \in [n] \setminus \{i\}}\frac{x - z_j}{z_i - z_j}$ and, in particular, since there are $O(n^2)$ many products of the form $1/(z_i- z_j)$, any na\"ive algorithm must take quadratic time, and converting it to coefficient form for using Horner's scheme (see Section~\ref{appx:parr_and_comp}) incurs a $O(n\log^c n )$ overhead. 
Thus, the encoding step for LCC has complexity $O((\chi  + \upsilon)\zeta n\log^c n )$, by~\cite[Theorem 10.10]{vg2013}. 
Moreover, divisions are known to be more expensive (for both LCC and CSA)
\cite{h03,vg2013}, thus they increase the overhead at each worker. In particular, the current theoretical bound (using Newton iterations) on division in terms of multiplication for number fields is 3 \cite{h03}.
Notice that evaluating the encoding function $\tilde B(x) = \sum_{i\in [n]} B_i \frac{1}{z_i-z} $ at a point $x_w$ is very unlikely to use less than $n(\chi + \upsilon) \zeta $ divisions since even if we give it a common denominator, i.e., $\tilde B(x) = (\prod_{i\in [n]} \frac{1}{z_i-z})\sum_{i\in [n]} B_i(\prod_{i\in [n]} {z_i-z})  \frac{1}{z_i-z} $, we would now have a super-linear increase in multiplications in the numerator. Furthermore, the evaluation of $\tilde B (x_w)$ is equivalent to computations involving Cauchy matrices, an old computational problem that has been studied extensively and conjectured to have a super-linear lower bound in the number of points plus poles (see \cite{vg2013}).
Furthermore, division-free algorithms are often preferred because they tend to avoid issues of division-by-0 and false-equality-with-0 \cite{r01}. 

~\\To summarize, there are settings in which Rook Codes are preferable over LCC and LCA codes, which motivates bounding their recovery threshold to obtain high robustness to faults.

\subsection{Bounding the Encoding Time of Our Construction of Rook Codes }
\label{appx:rook}

We provide here the promised analysis of the encoding time of our construction.

\begin{lemma}
\label{lemma:rook_comp}
   Let $p_0<p_1<...< p_{n-1}$ and $q_0<q_1<...< q_{n-1}$ be increasing orderings of the elements in $P$ and $Q$, respectively, and let $\delta(P,Q) $ equal the number of multiplications needed to compute the values $\{x^{p_i - p_{i-1}}\} _{i \in [n]} \cup \{x^{q_i - q_{i-1}}  \} _{i \in [n]}$, where $p_{-1}=q_{-1}=0$. Then, given $A_i,B_i$ for all $i$, each worker $w$ can locally encode its values $\Tilde{A}(x_w)$ and $\Tilde{B}(x_w)$ using at most $\delta(P, Q) + (\chi  + \upsilon)\zeta n $  multiplications.
\end{lemma}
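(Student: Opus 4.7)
The plan is to use a Horner-style evaluation scheme tailored to the sparse matrix polynomials $\tilde{A}(x)$ and $\tilde{B}(x)$, in such a way that each worker only ever needs the consecutive-difference powers $x_w^{p_i - p_{i-1}}$ and $x_w^{q_j - q_{j-1}}$, and never needs to form the individual monomials $x_w^{p_i}$ or $x_w^{q_j}$ explicitly.

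First I would have the worker compute the set of scalars
$\{x_w^{p_i - p_{i-1}}\}_{i \in [n]} \cup \{x_w^{q_j - q_{j-1}}\}_{j \in [n]}$, which by the definition of $\delta(P,Q)$ costs exactly $\delta(P,Q)$ scalar multiplications. Next, I would rewrite $\tilde{A}(x_w)$ in a nested form by successively factoring the smallest remaining power out of the tail. Concretely, one checks by induction that
\[
\tilde{A}(x_w) \;=\; x_w^{p_0}\Bigl(A_0 + x_w^{p_1 - p_0}\bigl(A_1 + x_w^{p_2 - p_1}(A_2 + \cdots + x_w^{p_{n-1} - p_{n-2}} A_{n-1})\bigr)\Bigr),
\]
so that evaluating this expression from the innermost bracket outward requires exactly $n$ scalar-matrix multiplications: one multiplication by each of the $n-1$ difference powers $x_w^{p_i - p_{i-1}}$ (for $1 \le i \le n-1$), and a final multiplication by $x_w^{p_0}$. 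The intermediate accumulators are $\chi \times \zeta$ matrices, so each such scalar-matrix product costs $\chi\zeta$ scalar multiplications; the matrix additions between products contribute zero multiplications. Thus evaluating $\tilde{A}(x_w)$ costs $n\chi\zeta$ multiplications beyond the precomputed difference powers.

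The evaluation of $\tilde{B}(x_w)$ is symmetric, using the difference powers for $Q$ (already counted in $\delta(P,Q)$) and $n$ scalar-matrix multiplications of $\zeta\times\upsilon$ matrices, contributing $n\zeta\upsilon$ scalar multiplications. Summing the three contributions gives a total of
$\delta(P,Q) + n\chi\zeta + n\zeta\upsilon = \delta(P,Q) + (\chi + \upsilon)\zeta n$ multiplications, as claimed.

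The main (and really only) conceptual obstacle is noticing that one can apply Horner's rule to a sparse polynomial directly using the differences of consecutive exponents, rather than first forming each $x_w^{p_i}$ and then taking a linear combination; the latter route would incur an extra $\Theta(n)$ scalar-scalar multiplications to chain the difference powers together, which would not fit inside the stated bound. Once the nested factorization is written down, the verification of the count is routine.
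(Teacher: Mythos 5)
Your proof is correct and follows essentially the same approach as the paper's: precompute the consecutive-difference powers at cost $\delta(P,Q)$, then apply the sparse Horner nested form $\tilde{A}(x_w) = x_w^{p_0}(A_0 + x_w^{p_1-p_0}(A_1 + \cdots))$ so that only $n$ scalar-matrix multiplications of the appropriate dimensions remain for each of $\tilde{A}$ and $\tilde{B}$. The accounting and the key observation (that Horner's rule adapts to sparse exponent sets via consecutive differences, including $x_w^{p_0}$ itself being covered by $\delta(P,Q)$ since $p_{-1}=0$) match the paper exactly.
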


\begin{proof}
Each worker $w$ uses Horner's rule~\cite{CLRS} to compute $\tilde A(x_w), \tilde B(x_w)$. There are at most $n$ non-zero coefficients in $\tilde A(x), \tilde B(x)$, and thus Horner's rule takes the form
$\Tilde{A} (x) = x^{p_0}(A_0 + x^{p_1-p_0}(A_1 + x^{p_2-p_1}(A_2+...)))$ for $\tilde{A}(x)$ (and similarly for $\tilde{B}(x))$, where the values $c_{w,p,k} = x_w^{p_k- p_{k-1}}$ and $c_{w,q,k} = x_w^{q_k- q_{k-1}}$ can all be computed using successive squaring for at most a $\delta(P,Q)$ number of multiplications. More precisely, we have that $f_{n-1}(x_w) := \Tilde{A}(x_w)$ can be recursively computed as
$f_{i+1}(x_w) := c_{w,p,n-1-i}(A_{n-1-i} + f_{i}(x_w))$,    where $f_0 = x_w^{p_{n-1}-p_{n-2}}A_{n-1}$. Therefore, computing $\tilde A(x_w)$ can be done using at most $\delta(P, Q) + (\chi  + \upsilon)\zeta n $  multiplications and $(\chi  + \upsilon)\zeta n $ additions. A similar argument applies for computing $\tilde B(x_w)$.
\end{proof}

The following lemma bounds $\delta(P,Q)$, so that plugging this bound in Lemma~\ref{lemma:rook_comp} gives our claimed time for encoding of our Rook Codes.
\begin{lemma}
$\delta(P,Q) = O(n \log^{1/2}(n) )$.
\end{lemma}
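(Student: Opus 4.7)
The plan is to exploit two facts: (i) by Behrend's construction used in the proof of \Cref{thm:y}, we may assume $P, Q \subseteq \{0,1,\dots,N\}$ where $N = n e^{O(\sqrt{\log n})}$, and (ii) successive squaring lets us share work across all the exponents we need to raise $x$ to. Specifically, I would first precompute the powers $x^{2^0}, x^{2^1}, \dots, x^{2^{\lceil \log_2 N\rceil}}$ by repeated squaring; this costs $O(\log N) = O(\log n)$ multiplications (since $\log N = \log n + O(\sqrt{\log n}) = O(\log n)$). Given these precomputed powers, any $x^d$ with $d \le N$ can be computed using at most $O(\log d)$ additional multiplications, by reading off the binary expansion of $d$ and multiplying together the precomputed powers corresponding to the $1$-bits of $d$.

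Applying this to the $2n$ gap exponents $d_i^P := p_i - p_{i-1}$ and $d_i^Q := q_i - q_{i-1}$ (with $p_{-1}=q_{-1}=0$), the total multiplication count is bounded by
\[
\delta(P,Q) \;\le\; O(\log N) \;+\; \sum_{i=0}^{n-1}\bigl(O(\log d_i^P) + O(\log d_i^Q)\bigr).
\]
The key observation is that $\sum_{i=0}^{n-1} d_i^P = p_{n-1} \le N$ and likewise $\sum_{i=0}^{n-1} d_i^Q \le N$. By the concavity of $\log$ (Jensen's inequality applied to the uniform distribution over $n$ terms),
\[
\sum_{i=0}^{n-1} \log d_i^P \;\le\; n \log\!\left(\frac{1}{n}\sum_{i=0}^{n-1} d_i^P\right) \;\le\; n \log(N/n),
\]
and similarly for the $d_i^Q$. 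Since $N = n e^{O(\sqrt{\log n})}$ gives $\log(N/n) = O(\sqrt{\log n})$, each of these sums is $O(n \sqrt{\log n})$.

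Combining the three contributions, $\delta(P,Q) \le O(\log n) + O(n\sqrt{\log n}) + O(n\sqrt{\log n}) = O(n \log^{1/2} n)$, which is the claimed bound. The main conceptual step is the use of Jensen's inequality to turn the Behrend bound on $\max\{P \cup Q\}$ into a per-gap bound on $\log d_i$; everything else (repeated squaring, binary-expansion exponentiation) is standard. The only mild subtlety to check carefully is that gaps of size $0$ or $1$ incur no multiplications (so the bound $O(\log d_i)$ should really be $O(1+\log d_i)$, contributing an extra $O(n)$ term that is absorbed into $O(n\sqrt{\log n})$), but this does not affect the final asymptotic bound.
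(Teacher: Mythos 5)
Your proof is correct, and it takes a genuinely different route from the paper's. The paper's argument unpacks the internal structure of Behrend's construction: each element of $P=Q$ has a base-$(2d-1)$ representation with $\ell = O(\sqrt{\log n})$ digits each bounded by $O(d)$, so after precomputing the $O(\ell d) = O(\log n)$ values $(x^{(2d-1)^i})^b$, every gap exponent $x^{p_j - p_{j-1}}$ is a product of $\ell$ precomputed factors, giving a \emph{uniform} cost of $O(\sqrt{\log n})$ multiplications per gap and hence $O(n\sqrt{\log n})$ total. Your argument instead treats $P$ and $Q$ as black boxes, using only the Behrend density bound $\max P \le N = n e^{O(\sqrt{\log n})}$: standard binary exponentiation gives a per-gap cost of $O(1 + \log d_i)$, and Jensen's inequality (via $\sum_i d_i = p_{n-1} \le N$) converts this into the \emph{average} bound $\sum_i \log d_i \le n\log(N/n) = O(n\sqrt{\log n})$. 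What your approach buys is generality and simplicity: it applies verbatim to any 3-AP-free set (indeed any set) of $n$ integers in $\{0,\dots, n e^{O(\sqrt{\log n})}\}$, whereas the paper's bound is tied to the particular digit expansion in Behrend's proof. The paper's approach buys a worst-case per-gap guarantee rather than an amortized one, which is immaterial for the lemma as stated. Your handling of the $d_i \in \{0,1\}$ edge cases (replacing $O(\log d_i)$ by $O(1+\log d_i)$ and excluding zero gaps from the Jensen step) is exactly the right care to take and is correctly absorbed into the final bound.
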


\begin{proof}
Consider our Rook codes of Theorem~\ref{thm:y} that use Behrend's construction~\cite{Behrend_1946}. For some constants $C_0,\dots,C_4$, define $d,\ell$ and $k$ as 
$$ \frac{C_0\log(n)}{2\ell}-\frac{1}{2} <  d \leq \frac{C_1\log(n)}{2\ell}+\frac{1}{2} ,
$$
$$ C_2 \log^{1/2}(n) \leq \ell = \sqrt{\frac{2 \log (N)}{\log (2) }}  < C_3 \log^{1/2}(n) ,
$$
and 
$$k \leq \ell (d-1)^2 < C_4 \log^{3/2}(n) .$$
Then we have that  
$$
P = Q = 
\{ a_0(2d-1)^0 + ... +  a_{\ell -1}(2d-1)^{\ell-1}
\ \vert \
\sum_{i \in [\ell]} a_i^2 = k 
,\  a_i \leq d
\}.
$$

It takes at most $O(\log (d)) $ many steps to compute $x^{2d-1}$ for some value of $x$.
Therefore, by the definition of $P$, the number of multiplications can be counted in terms of multiplying powers of $x$.  
In particular, if $p_j - p_{j-1} = \sum_{i \in [\ell] } b_{i}^{(j)} (2d-1)^i$ then we have that 
\begin{equation}\label{eq:s_sq_alg}
x^{p_j - p_{j-1} } = x^{b_0}x^{\sum_{i \in [1,\ell] } (2d-1)^ib_{i}^{(j)}} = x^{b_0}\prod_{i \in [1,\ell] }  (x^{(2d-1)^i})^{b_{i}^{(j)}}. 
\end{equation}
It takes at most $ O(\ell\log(d)) $ multiplications to compute $x,x^{2d-1},...,x^{(2d-1)^{\ell-1}}$ successively (since computing $x^{(2d-1)^{i+1}}$ given $x^{(2d-1)^{i}}$ only takes $O(\log(d))$ operations using successive squaring) and it takes at most $O(\ell d)$ multiplications to compute all of the possible values $(x^{(2d-1)^i})^{b_{i}^{j}}$ since the $b^j_i$ values are bounded by $O(d)$; therefore, each value in Equation~\eqref{eq:s_sq_alg} takes at most $\ell$ many multiplications to compute (after the first three preprocessing steps). 
We conclude that the number of multiplications needed is bounded by $O(n \ell ) =O(n \log^{1/2}(n) )  $ since Equation~\eqref{eq:s_sq_alg} is computed at most $n$ times.
\end{proof}

\section{Discussion}
This paper analyzes batch matrix multiplication in the master-worker setting that is both efficient and robust, by showing nearly matching upper and lower bounds for the recovery threshold of Rook Codes. 

Our lower bound in Theorem~\ref{thm:x} is surprising since, in the classical case of polynomial codes \cite{hp03, v98} (or Reed-Solomon codes \cite{rs60}), the communication rate is the same whether or not one defines the codewords as evaluations or as the coefficients of a polynomial. 
However, in the case of Rook codes and LCC we have that the recovery threshold (the measure of complexity that we care about here for batch matrix multiplication) is different between the evaluation codewords (Rook coding) and coefficient codewords (LCC). 

~\\We state here some further research directions.

\textbf{Constructing 3-AP free sets.} In the above setting, it is assumed that the number of pairs of matrices $n$ is fixed. One may consider the non-uniform case, in which $n$ is given as a parameter. In such a setting, the master would have to construct the appropriate 3-AP free sets according to $n$ as part of its computation. For this, one would need a construction that is also computationally efficient. It may be an interesting direction for further research and, in particular, the construction of \cite{Elkin10} could turn out useful. The latter constructions is also better in lower order terms of its size.

\textbf{Possible improvements of the lower bound.}
In additive combinatorics, problems such as finding 3-AP free sets are known as tri-colored sum-free problems \cite{Kleinberg2018,Aaronson2016,fox2017}. Although these are similar to 3-AP free sets, the bound achieved is stronger for 3-AP free sets than for tri-colored sum free sets. If $A\subset [N]$ is a 3-AP free set, it is proven that $|A|= O(\max(A)\exp{(-\Omega((\log N)^{\beta}))})$ for some constant $\beta>0$ ~\cite{kelley2023}. However, such strong qualitative bounds are not yet found for tri-colored sum free sets. In the language of \cite{zhao2023}, one reason for this is that the equality $a+b=2c$, which is characteristic of the 3-AP problem, is preserved even after $a,b,c$ is translated, whereas the equality $p+q=r$, which is characteristic of the tri-colored sum-free set is not preserved after translation (i.e., $(p+t)+(q+t)\neq (r+t)$). The proof for 3-AP free sets exploits this property by iteratively selecting sub-sequences of $A$, then scaling and translating this set to increase the density $|A|/\max\{A\}$ of the set. It is an open question whether the lower bound can indeed be increased to exactly match our upper bound.

\section*{Acknowledgements}
This project has received funding from the
European Union’s Horizon 2020 research and innovation programme under grant agreement no. 755839, and from ISF grant 529/23. We thank Yufei Zhao for his advice for an alternate and more concise proof of~\Cref{theorem:lower}.

\bibliography{skeleton.bib}

\begin{thebibliography}{10}

\bibitem{Aaronson2016}
James Aaronson.
\newblock A connection between matchings and removal in abelian groups, 2016.
\newblock \href {http://arxiv.org/abs/1612.04172} {\path{arXiv:1612.04172}}.

\bibitem{AggarwalDHS19}
Abhinav Aggarwal, Varsha Dani, Thomas~P. Hayes, and Jared Saia.
\newblock Multiparty interactive communication with private channels.
\newblock In Peter Robinson and Faith Ellen, editors, {\em Proceedings of the
  2019 {ACM} Symposium on Principles of Distributed Computing, {PODC} 2019,
  Toronto, ON, Canada, July 29 - August 2, 2019}, pages 147--149. {ACM}, 2019.
\newblock \href {https://doi.org/10.1145/3293611.3331571}
  {\path{doi:10.1145/3293611.3331571}}.

\bibitem{AlonBEGH19}
Noga Alon, Mark Braverman, Klim Efremenko, Ran Gelles, and Bernhard Haeupler.
\newblock Reliable communication over highly connected noisy networks.
\newblock {\em Distributed Comput.}, 32(6):505--515, 2019.
\newblock \href {https://doi.org/10.1007/s00446-017-0303-5}
  {\path{doi:10.1007/s00446-017-0303-5}}.

\bibitem{AttiyaWBook}
Hagit Attiya and Jennifer~L. Welch.
\newblock {\em Distributed computing - fundamentals, simulations, and advanced
  topics {(2.} ed.)}.
\newblock Wiley series on parallel and distributed computing. Wiley, 2004.

\bibitem{AzadBBDGSTW16}
Ariful Azad, Grey Ballard, Aydin Bulu{\c{c}}, James Demmel, Laura Grigori, Oded
  Schwartz, Sivan Toledo, and Samuel Williams.
\newblock Exploiting multiple levels of parallelism in sparse matrix-matrix
  multiplication.
\newblock {\em {SIAM} J. Sci. Comput.}, 38(6), 2016.
\newblock \href {https://doi.org/10.1137/15M104253X}
  {\path{doi:10.1137/15M104253X}}.

\bibitem{BallardDKS16}
Grey Ballard, Alex Druinsky, Nicholas Knight, and Oded Schwartz.
\newblock Hypergraph partitioning for sparse matrix-matrix multiplication.
\newblock {\em {ACM} Trans. Parallel Comput.}, 3(3):18:1--18:34, 2016.
\newblock \href {https://doi.org/10.1145/3015144} {\path{doi:10.1145/3015144}}.

\bibitem{Behrend_1946}
Felix~A. Behrend.
\newblock On sets of integers which contain no three terms in arithmetical
  progression.
\newblock {\em Proceedings of the National Academy of Sciences},
  32(12):331--332, 1946.
\newblock URL: \url{https://www.pnas.org/doi/abs/10.1073/pnas.32.12.331}, \href
  {http://arxiv.org/abs/https://www.pnas.org/doi/pdf/10.1073/pnas.32.12.331}
  {\path{arXiv:https://www.pnas.org/doi/pdf/10.1073/pnas.32.12.331}}, \href
  {https://doi.org/10.1073/pnas.32.12.331} {\path{doi:10.1073/pnas.32.12.331}}.

\bibitem{Censor-HillelCG22}
Keren Censor{-}Hillel, Shir Cohen, Ran Gelles, and Gal Sela.
\newblock Distributed computations in fully-defective networks.
\newblock In Alessia Milani and Philipp Woelfel, editors, {\em {PODC} '22:
  {ACM} Symposium on Principles of Distributed Computing, Salerno, Italy, July
  25 - 29, 2022}, pages 141--150. {ACM}, 2022.
\newblock \href {https://doi.org/10.1145/3519270.3538432}
  {\path{doi:10.1145/3519270.3538432}}.

\bibitem{CHDKL20}
Keren Censor{-}Hillel, Michal Dory, Janne~H. Korhonen, and Dean Leitersdorf.
\newblock Fast approximate shortest paths in the congested clique.
\newblock {\em Distributed Computing}, 2020.
\newblock \href {https://doi.org/10.1007/s00446-020-00380-5}
  {\path{doi:10.1007/s00446-020-00380-5}}.

\bibitem{Censor-HillelGH19}
Keren Censor{-}Hillel, Ran Gelles, and Bernhard Haeupler.
\newblock Making asynchronous distributed computations robust to noise.
\newblock {\em Distributed Comput.}, 32(5):405--421, 2019.
\newblock \href {https://doi.org/10.1007/s00446-018-0343-5}
  {\path{doi:10.1007/s00446-018-0343-5}}.

\bibitem{Censor-Hillel+DC19}
Keren Censor{-}Hillel, Petteri Kaski, Janne~H. Korhonen, Christoph Lenzen, Ami
  Paz, and Jukka Suomela.
\newblock Algebraic methods in the congested clique.
\newblock {\em Distributed Computing}, 32(6):461--478, 2019.
\newblock \href {https://doi.org/10.1007/s00446-016-0270-2}
  {\path{doi:10.1007/s00446-016-0270-2}}.

\bibitem{Censor+TCS20}
Keren Censor{-}Hillel, Dean Leitersdorf, and Elia Turner.
\newblock Sparse matrix multiplication and triangle listing in the congested
  clique model.
\newblock {\em Theor. Comput. Sci.}, 809:45--60, 2020.
\newblock \href {https://doi.org/10.1016/j.tcs.2019.11.006}
  {\path{doi:10.1016/j.tcs.2019.11.006}}.

\bibitem{cgw21}
Zhen Chen, Zhuqing Jia, Zhiying Wang, and Syed~Ali Jafar.
\newblock Gcsa codes with noise alignment for secure coded multi-party batch
  matrix multiplication.
\newblock {\em IEEE Journal on Selected Areas in Information Theory},
  2(1):306--316, 2021.
\newblock \href {https://doi.org/10.1109/JSAIT.2021.3052934}
  {\path{doi:10.1109/JSAIT.2021.3052934}}.

\bibitem{CLRS}
Thomas~H. Cormen, Charles~E. Leiserson, Ronald~L. Rivest, and Clifford Stein.
\newblock {\em Introduction to Algorithms, 3rd Edition}.
\newblock {MIT} Press, 2009.
\newblock URL: \url{http://mitpress.mit.edu/books/introduction-algorithms}.

\bibitem{Drucker+PODC14}
Andrew Drucker, Fabian Kuhn, and Rotem Oshman.
\newblock On the power of the congested clique model.
\newblock In {\em Proceedings of the {ACM} Symposium on Principles of
  Distributed Computing ({PODC})}, pages 367--376, 2014.
\newblock \href {https://doi.org/10.1145/2611462.2611493}
  {\path{doi:10.1145/2611462.2611493}}.

\bibitem{DBJMG2018}
Sanghamitra Dutra, Ziqian Bai, Haewon Jeong, Tze~Meng Low, and Pulkit Grover.
\newblock A unified coded deep neural network training strategy based on
  generalized polydot codes.
\newblock page 1585–1589, 2018.
\newblock \href {https://doi.org/10.1109/ISIT.2018.8437852}
  {\path{doi:10.1109/ISIT.2018.8437852}}.

\bibitem{dfhjcg2020}
Sanghamitra Dutta, Mohammad Fahim, Farzin Haddadpour, Haewon Jeong, Viveck
  Cadambe, and Pulkit Grover.
\newblock On the optimal recovery threshold of coded matrix multiplication.
\newblock {\em IEEE Trans. Inf. Theory}, pages 278--301, 2020.

\bibitem{oehk21}
Rafael G.~L. D’Oliveira, Salim El~Rouayheb, Daniel Heinlein, and David
  Karpuk.
\newblock Degree tables for secure distributed matrix multiplication.
\newblock {\em IEEE Journal on Selected Areas in Information Theory},
  2(3):907--918, 2021.
\newblock \href {https://doi.org/10.1109/JSAIT.2021.3102882}
  {\path{doi:10.1109/JSAIT.2021.3102882}}.

\bibitem{oek20}
Rafael G.~L. D’Oliveira, Salim El~Rouayheb, and David Karpuk.
\newblock Gasp codes for secure distributed matrix multiplication.
\newblock {\em IEEE Transactions on Information Theory}, 66(7):4038--4050,
  2020.
\newblock \href {https://doi.org/10.1109/TIT.2020.2975021}
  {\path{doi:10.1109/TIT.2020.2975021}}.

\bibitem{EfremenkoKS20}
Klim Efremenko, Gillat Kol, and Raghuvansh~R. Saxena.
\newblock Noisy beeps.
\newblock In Yuval Emek and Christian Cachin, editors, {\em {PODC} '20: {ACM}
  Symposium on Principles of Distributed Computing, Virtual Event, Italy,
  August 3-7, 2020}, pages 418--427. {ACM}, 2020.
\newblock \href {https://doi.org/10.1145/3382734.3404501}
  {\path{doi:10.1145/3382734.3404501}}.

\bibitem{Elkin10}
Michael Elkin.
\newblock An improved construction of progression-free sets.
\newblock In Moses Charikar, editor, {\em Proceedings of the Twenty-First
  Annual {ACM-SIAM} Symposium on Discrete Algorithms, {SODA} 2010, Austin,
  Texas, USA, January 17-19, 2010}, pages 886--905. {SIAM}, 2010.
\newblock \href {https://doi.org/10.1137/1.9781611973075.72}
  {\path{doi:10.1137/1.9781611973075.72}}.

\bibitem{fc21}
Mohammad Fahim and Viveck~R. Cadambe.
\newblock Numerically stable polynomially coded computing.
\newblock {\em IEEE Transactions on Information Theory}, 67(5):2758--2785,
  2021.
\newblock \href {https://doi.org/10.1109/TIT.2021.3050526}
  {\path{doi:10.1109/TIT.2021.3050526}}.

\bibitem{fox2017}
Jacob Fox and L{\'{a}}szl{\'{o}}~Mikl{\'{o}}s Lov{\'{a}}sz.
\newblock A tight bound for green's arithmetic triangle removal lemma in vector
  spaces.
\newblock In Philip~N. Klein, editor, {\em Proceedings of the Twenty-Eighth
  Annual {ACM-SIAM} Symposium on Discrete Algorithms, {SODA} 2017, Barcelona,
  Spain, Hotel Porta Fira, January 16-19}, pages 1612--1617. {SIAM}, 2017.
\newblock \href {https://doi.org/10.1137/1.9781611974782.106}
  {\path{doi:10.1137/1.9781611974782.106}}.

\bibitem{Freiman}
Gregory Freiman.
\newblock Foundations of a structural theory of set addition.
\newblock {\em American Mathematical Society, Translations of Mathematical Mono
  graphs}, vol. 37, 1973.

\bibitem{LeGall16}
Fran{\c{c}}ois~Le Gall.
\newblock Further algebraic algorithms in the congested clique model and
  applications to graph-theoretic problems.
\newblock In {\em Proceedings of the 30th International Symposium on
  Distributed Computing ({DISC})}, pages 57--70, 2016.
\newblock \href {https://doi.org/10.1007/978-3-662-53426-7\_5}
  {\path{doi:10.1007/978-3-662-53426-7\_5}}.

\bibitem{GuptaHKSS22}
Chetan Gupta, Juho Hirvonen, Janne~H. Korhonen, Jan Studen{\'{y}}, and Jukka
  Suomela.
\newblock Sparse matrix multiplication in the low-bandwidth model.
\newblock In Kunal Agrawal and I{-}Ting~Angelina Lee, editors, {\em {SPAA} '22:
  34th {ACM} Symposium on Parallelism in Algorithms and Architectures,
  Philadelphia, PA, USA, July 11 - 14, 2022}, pages 435--444. {ACM}, 2022.
\newblock \href {https://doi.org/10.1145/3490148.3538575}
  {\path{doi:10.1145/3490148.3538575}}.

\bibitem{HaeuplerWZ20}
Bernhard Haeupler, David Wajc, and Goran Zuzic.
\newblock Network coding gaps for completion times of multiple unicasts.
\newblock In Sandy Irani, editor, {\em 61st {IEEE} Annual Symposium on
  Foundations of Computer Science, {FOCS} 2020, Durham, NC, USA, November
  16-19, 2020}, pages 494--505. {IEEE}, 2020.
\newblock \href {https://doi.org/10.1109/FOCS46700.2020.00053}
  {\path{doi:10.1109/FOCS46700.2020.00053}}.

\bibitem{h03}
Karl Hasselstr{\"o}m.
\newblock Fast division of large integers: A comparison of algorithms.
\newblock Master's thesis, 2003.

\bibitem{hp03}
W.~Cary Huffman and Vera Pless.
\newblock Fundamentals of error-correcting codes.
\newblock 2003.
\newblock URL: \url{https://api.semanticscholar.org/CorpusID:60284659}.

\bibitem{jj21}
Zhuqing Jia and Syed~Ali Jafar.
\newblock Cross subspace alignment codes for coded distributed batch
  computation.
\newblock {\em IEEE Transactions on Information Theory}, 67(5):2821--2846,
  2021.
\newblock \href {https://doi.org/10.1109/TIT.2021.3064827}
  {\path{doi:10.1109/TIT.2021.3064827}}.

\bibitem{jj21a}
Zhuqing Jia and Syed~Ali Jafar.
\newblock On the capacity of secure distributed batch matrix multiplication.
\newblock {\em IEEE Transactions on Information Theory}, 67(11):7420--7437,
  2021.
\newblock \href {https://doi.org/10.1109/TIT.2021.3112952}
  {\path{doi:10.1109/TIT.2021.3112952}}.

\bibitem{kelley2023}
Zander Kelley and Raghu Meka.
\newblock Strong bounds for 3-progressions.
\newblock 2023.
\newblock \href {http://arxiv.org/abs/2302.05537} {\path{arXiv:2302.05537}}.

\bibitem{Kleinberg2018}
Robert Kleinberg, Will Sawin, and David Speyer.
\newblock The growth rate of tri-colored sum-free sets.
\newblock {\em Discrete Analysis}, 2018.
\newblock URL: \url{https://doi.org/10.19086%2Fda.3734}, \href
  {https://doi.org/10.19086/da.3734} {\path{doi:10.19086/da.3734}}.

\bibitem{LLPPR2018}
Kangwook Lee, Maximilian Lam, Ramtin Pedarsani, Dimitris Papailiopoulos, and
  Kannan Ramchandran.
\newblock {Speeding Up Distributed Machine Learning Using Codes}.
\newblock {\em IEEE Trans. Inf. Theory}, pages 1514--1529, 2018.

\bibitem{LynchBook}
Nancy~A. Lynch.
\newblock {\em Distributed Algorithms}.
\newblock Morgan Kaufmann, 1996.

\bibitem{MilakovicSNBB22}
Srdan Milakovic, Oguz Selvitopi, Israt Nisa, Zoran Budimlic, and Aydin
  Bulu{\c{c}}.
\newblock Parallel algorithms for masked sparse matrix-matrix products.
\newblock In Jaejin Lee, Kunal Agrawal, and Michael~F. Spear, editors, {\em
  PPoPP '22: 27th {ACM} {SIGPLAN} Symposium on Principles and Practice of
  Parallel Programming, Seoul, Republic of Korea, April 2 - 6, 2022}, pages
  453--454. {ACM}, 2022.
\newblock \href {https://doi.org/10.1145/3503221.3508430}
  {\path{doi:10.1145/3503221.3508430}}.

\bibitem{Petridis_2012}
Giorgis Petridis.
\newblock New proofs of {P}l{\"{u}}nnecke-type estimates for product sets in
  groups.
\newblock {\em Comb.}, 32(6):721--733, 2012.
\newblock \href {https://doi.org/10.1007/s00493-012-2818-5}
  {\path{doi:10.1007/s00493-012-2818-5}}.

\bibitem{rk20}
Netanel Raviv and David~A. Karpuk.
\newblock Private polynomial computation from lagrange encoding.
\newblock {\em IEEE Transactions on Information Forensics and Security},
  15:553--563, 2020.
\newblock \href {https://doi.org/10.1109/TIFS.2019.2925723}
  {\path{doi:10.1109/TIFS.2019.2925723}}.

\bibitem{rs60}
I.~S. Reed and G.~Solomon.
\newblock Polynomial codes over certain finite fields.
\newblock {\em Journal of the Society for Industrial and Applied Mathematics},
  8(2):300--304, 1960.
\newblock \href {http://arxiv.org/abs/https://doi.org/10.1137/0108018}
  {\path{arXiv:https://doi.org/10.1137/0108018}}, \href
  {https://doi.org/10.1137/0108018} {\path{doi:10.1137/0108018}}.

\bibitem{r01}
G\"{u}nter Rote.
\newblock {\em Division-Free Algorithms for the Determinant and the Pfaffian:
  Algebraic and Combinatorial Approaches}, page 119–135.
\newblock Springer-Verlag, Berlin, Heidelberg, 2001.

\bibitem{Ruzsa2009}
Imre Ruzsa.
\newblock Sumsets and structure.
\newblock {\em Combinatorial Number Theory and Additive Group Theory}, 2009.

\bibitem{sma21}
Mahdi Soleymani, Hessam Mahdavifar, and A.~Salman Avestimehr.
\newblock Analog lagrange coded computing.
\newblock {\em IEEE Journal on Selected Areas in Information Theory},
  2(1):283--295, 2021.
\newblock \href {https://doi.org/10.1109/JSAIT.2021.3056377}
  {\path{doi:10.1109/JSAIT.2021.3056377}}.

\bibitem{sfsl22}
Pedro Soto, Xiaodi Fan, Angel Saldivia, and Jun Li.
\newblock Rook coding for batch matrix multiplication.
\newblock {\em IEEE Transactions on Communications}, pages 1--1, 2022.
\newblock \href {https://doi.org/10.1109/TCOMM.2022.3165201}
  {\path{doi:10.1109/TCOMM.2022.3165201}}.

\bibitem{sl20}
Pedro Soto and Jun Li.
\newblock Straggler-free coding for concurrent matrix multiplications.
\newblock In {\em 2020 IEEE International Symposium on Information Theory
  (ISIT)}, pages 233--238, 2020.
\newblock \href {https://doi.org/10.1109/ISIT44484.2020.sl20}
  {\path{doi:10.1109/ISIT44484.2020.sl20}}.

\bibitem{sigl22}
Pedro~J Soto, Ilia Ilmer, Haibin Guan, and Jun Li.
\newblock Lightweight projective derivative codes for compressed asynchronous
  gradient descent.
\newblock In Kamalika Chaudhuri, Stefanie Jegelka, Le~Song, Csaba Szepesvari,
  Gang Niu, and Sivan Sabato, editors, {\em Proceedings of the 39th
  International Conference on Machine Learning}, volume 162 of {\em Proceedings
  of Machine Learning Research}, pages 20444--20458. PMLR, 17--23 Jul 2022.
\newblock URL: \url{https://proceedings.mlr.press/v162/soto22a.html}.

\bibitem{tldk17}
Rashish Tandon, Qi~Lei, Alexandros~G. Dimakis, and Nikos Karampatziakis.
\newblock Gradient coding: Avoiding stragglers in distributed learning.
\newblock In {\em ICML}, pages 3368--3376, 2017.

\bibitem{tao_vu_2006}
Terence Tao and Van~H. Vu.
\newblock {\em Additive Combinatorics}.
\newblock Cambridge Studies in Advanced Mathematics. Cambridge University
  Press, 2006.
\newblock \href {https://doi.org/10.1017/CBO9780511755149}
  {\path{doi:10.1017/CBO9780511755149}}.

\bibitem{v98}
Jacobus~Hendricus Van~Lint.
\newblock {\em Introduction to coding theory}, volume~86.
\newblock Springer Science \& Business Media, 1998.

\bibitem{vg2013}
Joachim Von Zur~Gathen and J{\"u}rgen Gerhard.
\newblock {\em Modern computer algebra}.
\newblock Cambridge University Press, 2013.

\bibitem{ylrksa19}
Qian Yu, Songze Li, Netanel Raviv, Seyed Mohammadreza~Mousavi Kalan, Mahdi
  Soltanolkotabi, and Amir~Salman Avestimehr.
\newblock Lagrange coded computing: Optimal design for resiliency, security,
  and privacy.
\newblock In {\em The 22nd International Conference on Artificial Intelligence
  and Statistics ({AISTATS})}, volume~89 of {\em Proceedings of Machine
  Learning Research}, pages 1215--1225. {PMLR}, 2019.
\newblock URL: \url{http://proceedings.mlr.press/v89/yu19b.html}.

\bibitem{yma17}
Qian Yu, Mohammad~Ali Maddah-Ali, and A.~Salman Avestimehr.
\newblock Polynomial codes: An optimal design for high-dimensional coded matrix
  multiplication.
\newblock In {\em Proceedings of the 31st International Conference on Neural
  Information Processing Systems (NIPS)}, page 4406–4416, 2017.

\bibitem{YAA2018}
Qian Yu, Mohammad~Ali Maddah-Ali, and A.~Salman Avestimehr.
\newblock Straggler mitigation in distributed matrix multiplication:
  Fundamental limits and optimal coding.
\newblock {\em IEEE Transactions on Information Theory}, 66(3):1920--1933,
  2020.
\newblock \href {https://doi.org/10.1109/TIT.2019.2963864}
  {\path{doi:10.1109/TIT.2019.2963864}}.

\bibitem{zhao2023}
Yufei Zhao.
\newblock {\em Graph Theory and Additive Combinatorics: Exploring Structure and
  Randomness}.
\newblock Cambridge University Press, 2023.

\end{thebibliography}

\end{document}